\documentclass[sigconf]{acmart}

\usepackage{graphicx}
\usepackage[ruled,linesnumbered,vlined]{algorithm2e} 
\usepackage{varwidth}
\usepackage{algpseudocode}
\usepackage{amssymb}
\usepackage{mathrsfs} 
\usepackage{array}
\usepackage{mdwmath}
\usepackage{mdwtab}
\usepackage{eqparbox}
\usepackage{tabularx}
\usepackage[font=footnotesize,caption=false]{subfig} 
\usepackage[belowskip=0pt,aboveskip=5pt]{caption}
\captionsetup[table]{skip=5pt}

\usepackage{stfloats}
\usepackage{url}
\usepackage{booktabs}
\usepackage{hyperref}    
\usepackage{amsfonts}
\usepackage{amsmath}
\usepackage{makecell}

\setcopyright{rightsretained}

\newcommand{\ourimpl}{\textsc{Pgbsc}}
\newcommand{\fasciaimpl}{\textsc{Fascia}}
\newcommand{\pvtsc}{\textsc{PFascia}}



\acmConference[HPDC'19]{International ACM Symposium on
High-Performance Parallel and Distributed Computing
}{June 2019}{Phoenix, Arizona, USA}
\acmYear{2019}
\copyrightyear{2019}
%
%


\begin{document}
\title{A GraphBLAS Approach for Subgraph Counting}

\author{Langshi Chen}
\affiliation{%
  \institution{Indiana University}
}
\email{lc37@indiana.edu}

\author{Jiayu Li}
\affiliation{%
  \institution{Indiana University}
}
\email{jl145@iu.edu}

\author{Ariful Azad}
\affiliation{%
  \institution{Indiana University}
}
\email{azad@iu.edu}

\author{Lei Jiang}
\affiliation{%
  \institution{Indiana University}
}
\email{jiang60@iu.edu}

\author{Madhav Marathe}
\affiliation{%
 \institution{University of Virginia}
}
\email{marathe@virginia.edu}

\author{Anil Vullikanti}
\affiliation{%
  \institution{University of Virginia}
}
\email{vsakumar@virginia.edu}

\author{Andrey Nikolaev}
\affiliation{%
  \institution{Intel Corporation}
  }
\email{andrey.nikolaev@intel.com}

\author{Egor Smirnov}
\affiliation{%
  \institution{Intel Corporation}
  }
\email{egor.smirnov@intel.com}

\author{Ruslan Israfilov}
\affiliation{%
  \institution{Intel Corporation}
  }
\email{ruslan.israfilov@intel.com}

\author{Judy Qiu}
\affiliation{%
  \institution{Indiana University}
  }
\email{xqiu@indiana.edu}

\renewcommand{\shortauthors}{Chen et al.}

\begin{abstract}
Subgraph counting aims to count the occurrences of a subgraph template T in a given network G. The basic problem of computing structural properties such as counting triangles and other subgraphs has found applications in diverse domains. Recent biological, social, cybersecurity and sensor network applications have motivated solving such problems on massive networks with billions of vertices. 
The larger subgraph problem is known to be memory bounded and computationally challenging to scale; the complexity grows both as a function of T and G. 
In this paper, we study the non-induced tree subgraph counting problem, propose a novel layered software-hardware co-design approach, and implement 
a shared-memory multi-threaded algorithm: 1) reducing the complexity of the parallel color-coding algorithm by identifying and pruning
redundant graph traversal; 2) achieving a fully-vectorized implementation upon linear algebra kernels inspired by GraphBLAS, 
which significantly improves cache usage and maximizes memory bandwidth utilization. Experiments show that our implementation improves the 
overall performance over the state-of-the-art work by orders of magnitude and up to 660x for subgraph templates with size over 12 on a dual-socket Intel(R) Xeon(R) Platinum 8160 server. We believe our approach using GraphBLAS with optimized sparse linear algebra can be applied to other massive subgraph counting problems and emerging high-memory bandwidth hardware architectures.

\end{abstract}
%

\keywords{Subgraph Counting, Color coding, GraphBLAS, Vectorization}

\maketitle

\section{Introduction}
\label{sec:introduction}

The naive algorithm of counting the exact number of subgraphs (including a triangle in its simplest form) of size $k$ in a $n$-vertex network takes $O(k^{2}n^k)$ time and is an NP-hard problem and computationally challenging, even for moderate values of $n$ and $k$.
Nevertheless, counting subgraphs from a large network is fundamental in 
numerous applications, and approximate algorithms are developed to estimate 
the exact count with statistical guarantees. In protein research, the physical contacts between proteins in the cell are represented as a network, and this protein-protein interaction network (PPIN) is crucial in understanding the cell physiology which helps develop new drugs. 
Large PPINs may contain hundreds of thousands of vertices (proteins) and millions of edges (interactions) while they usually contain repeated local structures (motifs). 
Finding and counting these motifs (subgraphs) is essential to compare different PPINs. Arvind et al.\cite{ArvindApproximationAlgorithmsParameterized2002a} counts bounded treewidth graphs but still has a time complexity super-polynomial to network size $n$. Alon et al. \cite{AlonBiomolecularNetworkMotif2008, AlonBalancedFamiliesPerfect2007} provide a practical algorithm to count trees and graphs 
of bounded treewidth (size less than 10) from PPINs of unicellular and 
multicellular organisms by using the color-coding technique developed in~\cite{alon_color-coding_1995}. 
In online social network (OSN) analysis, the graph size could even reach a 
billion or trillion, where a motif may not be as simple as a vertex (user) with a high degree but a group of users sharing specific interests. Studying these groups improves the design of the OSN system and the searching algorithm. \cite{chen:icdm16} enables an estimate of graphlet (size up to 5) counts in an
OSN with 50 million of vertices and 200 million of edges. 
Although the color-coding algorithm in~\cite{AlonBiomolecularNetworkMotif2008}
has a time complexity linear in network size, it is exponential in the size of subgraph. Therefore, efficient parallel implementations are the only viable 
way to count subgraphs from large-scale networks. To the best of our knowledge, a multi-threaded implementation named FASCIA~\cite{SlotaFastApproximateSubgraph2013} is considered to be the state-of-the-art work in this area. Still, it takes FASCIA more than 4 days (105 hours) to count a 17-vertex subgraph from the RMAT-1M network (1M vertices, 200M edges) on a 48-core Intel (R) Skylake processor. While our proposed shared memory multi-threaded algorithm named \textbf{\ourimpl{}}, which prunes
the color-coding algorithm as well as implements GraphBLAS inspired 
vectorization, takes only 9.5 minutes to complete the same task on the same hardware.
The primary contributions of this paper are as follows:
\begin{itemize}
\item {\bf Algorithmic optimization.} We identify and reduce the redundant computation complexity of the sequential color-coding algorithm to improve the parallel performance by a factor of up to 86.
\item {\bf System design and optimization.} We refactor the data structure as well as the execution order to maximize the hardware
efficiency in terms of vector register and memory bandwidth usage.
The new design replaces the vertex-programming model by using linear algebra kernels inspired by the GraphBLAS approach.  
\item {\bf Performance evaluation and comparison to prior work.} We characterize the improvement compared to state-of-the-art work \fasciaimpl{} from both of theoretical analysis and experiment results, and our solution attains the full hardware efficiency according to a roofline model analysis. 
\end{itemize}

\section{Background and Related Work}
\label{sec:related work}
\subsection{Subgraph Counting by Color coding}
\label{sub:subgraph_counting}
A subgraph $H(V_H,E_H)$ of a simple unweighted graph $G(V,E)$ is a graph $H$ whose vertex set and edge set are subsets of those of $G$. 
$H$ is an embedding of a template graph $T(V_T,E_T)$ if $T$ is isomorphic to $H$.
The subgraph counting problem is to count the number of all embeddings of a given template $T$ in a network $G$. \par
\begin{figure}[ht]
    \centering
    \includegraphics[width=\linewidth]{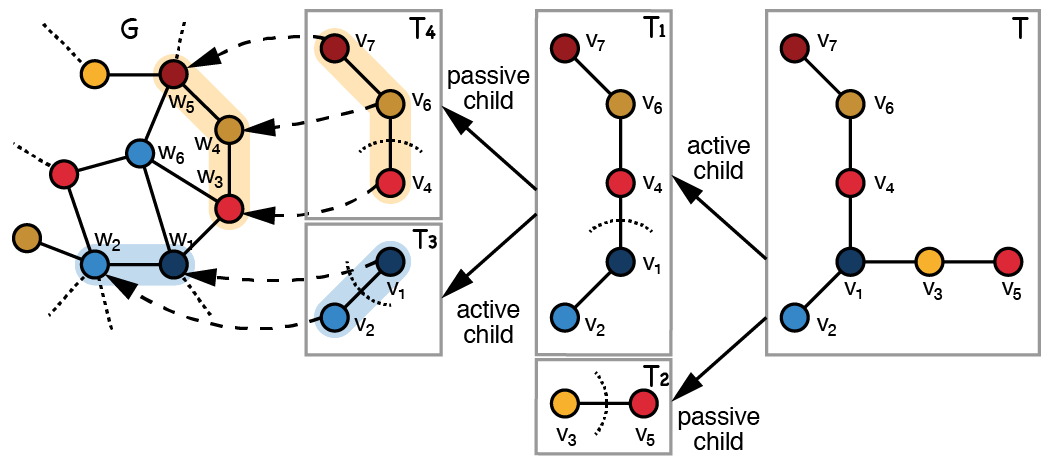}
    \caption{Illustration of the template partitioning within a
    colored input $G=(V,E)$}
    \label{fig:partitionTemplates}
\end{figure}
Color coding \cite{AlonBiomolecularNetworkMotif2008} provides a \emph{fixed parameter tractable} algorithm to address the subgraph counting problem where $T$ is a tree.
It has a time complexity of $O(c^k\text{poly}(n))$, which
is exponential to template size $k$ but polynomial to vertex number $n$. The algorithm consists of three main phases:
\begin{enumerate}
    \item \textbf{Random coloring}. Each vertex $v \in G(V,E)$
    is given an integer value of color randomly selected
    between $0$ and $k-1$, where $k \ge |V_T|$. 
    We consider $k=|T|$ for simplicity, and $G(V,E)$ is
    therefore converted to a labeled graph.
We consider an embedding $H$ as "colorful" if
each of its vertices has a distinct color value. In~\cite{alon_color-coding_1995}, Alon proves that the 
probability of $H$ being colorful is $\frac{k!}{k^k}$, 
and color coding approximates the exact number of $H$ by using the count of colorful $H$. 
\item \textbf{Template partitioning}.
When partitioning a template $T$, a single vertex is selected as the root $\rho$ while $T_s(\rho)$ refers to the $s$-th sub-template rooted at $\rho$. Secondly, one of the edges $(\rho, \tau)$ adjacent to root $\rho$ is cut, creating two child sub-templates. The child holding $\rho$ as its root is named \emph{active child} and denoted as $T_{s,a}(\rho)$. The child rooted at $\tau$ of the cutting edge is named \emph{passive child} and denoted as $T_{s,p}(\tau)$.
This partitioning recursively applies until each sub-template has just one vertex. A dynamic programming process is then applied in a bottom-up way through all the
sub-templates $T_s$ to obtain the count of $T$.

\item \textbf{Counting by dynamic programming}.
For each vertex $V_i \in G(V,E)$ at each sub-template $T_s$, we notate $N(V_i, T_s, C_s)$ as the count of 
embeddings of $T_s$ with its root mapped to $V_i$ using a color set $C_s$ drawn from $k=|T|$ colors.
Each $C_s$ is split into a color set $C_{s,a}$ for $T_{s,a}$ and another $C_{s,p}$ for $T_{s,p}$. For bottom sub-template $|T_s|=1$, $N(V_i, T_s, C_s)$ is $1$ only if $C_s$ equals the color randomly assigned to $V_0$ and otherwise it is $0$. For non-bottom sub-template with $|T_s|\ge1$, we have $N(V_i,T_s,C_s)=\sum_{V_j \in N(V_i)}N(V_i,T_{s,a},C_{s,a})N(V_j, T_{s,p},C_{s,p})$, where $N(V_i, T_{s,a}, C_{s,a})$ and $N(V_j, T_{s,p}, C_{s,p})$ have been calculated in previous steps of the dynamic programming because $|T_{s,a}|\le |T_s|, |T_{s,p}|\le |T_s|$. 
\end{enumerate}

A tree subgraph enumeration algorithm by combining color coding with a stream-based cover decomposition was developed in ~\cite{zhao2010subgraph}. To process massive networks, \cite{zhao_sahad:_2012} developed a distributed
version of color-coding based tree counting solution upon MapReduce framework in Hadoop, \cite{SlotaComplexNetworkAnalysis2014a} implemented a MPI-based solution, and ~\cite{zhao2018finding}~\cite{chen2018high} pushed the limit of
subgraph counting to process billion-edged networks 
and trees up to 15 vertices.\par

Beyond counting trees, a sampling and random-walk based technique has been applied to count graphlets, a small induced graph with size up to 4 or 5, which include the work of \cite{AhmedEfficientGraphletCounting2015} and \cite{chen:icdm16}. Later, \cite{chakaravarthy2016subgraph} extends 
color coding to count any graph with a treewidth of 2 in a distributed system. Also, \cite{RezaPruneJuicePruningTrillionedge2018} provides a pruning method on labeled networks and graphlets to reduce the vertex 
number by orders of magnitude prior to the actual counting. \par 

Other subgraph topics include: 1) \emph{subgraph finding}.  
As in~\cite{EkanayakeMIDASMultilinearDetection2018}, paths and trees with size up to 18 could be detected by using multilinear detection; 2) \emph{Graphlet Frequency Distribution} estimates relative frequency among all subgraphs with the same size
\cite{bressan_counting_2017}~\cite{PrzuljBiologicalNetworkComparison2010}; 3)
\emph{clustering} networks by using the relative frequency of their subgraphs
~\cite{RahmanGraftEfficientGraphlet2014}.
 
\subsection{GraphBLAS}
\label{subsec:graphblas}

A Graph can be represented as its adjacency matrix, 
and many key graph algorithms are expressed in terms of
linear algebra
~\cite{kepner2016mathematical}~\cite{kepner2015graphs}. 
The GraphBLAS project\footnote{\url{www.graphblas.org}} was 
inspired by the Basic Linear Algebra Subprograms (BLAS)
familiar to the HPC community 
with the goal of building graph algorithms upon a small 
set of kernels such as sparse matrix-vector 
multiplication (SpMV) and sparse matrix-matrix
multiplication (SpGEMM). The GraphBLAS community
standardizes~\cite{mattson2013standards} such kernel operations, 
and a GraphBLAS C API has been provided~\cite{mattson2017graphblas}. Systems consistent with GraphBLAS philosophy include: 
Combinatorial BLAS~\cite{BulucCombinatorialBLASdesign2011}, GraphMat~\cite{SundaramGraphMatHighPerformance2015},  Graphhulo~\cite{HutchisonNoSQLAccumuloNewSQL2016}, 
and GraphBLAS Template Library~\cite{ZhangGBTLCUDAGraphAlgorithms2016a}. 
Recently, SuiteSparse GraphBLAS~\cite{DavisGraphalgorithmsSuiteSparse2018} provides a sequential implementation of the GraphBLAS C API. GraphBLAST~\cite{YangDesignPrinciplesSparse2018a} provides
a group of graph primitives implemented on GPU.\par
GraphBLAS operations have been successfully employed to implement a suite of traditional graph algorithms 
including Breadth-first traversal
(BFS)~\cite{SundaramGraphMatHighPerformance2015},
Single-source shortest path
(SSSP)~\cite{SundaramGraphMatHighPerformance2015}, 
Triangle Counting~\cite{AzadParallelTriangleCounting2015},
and so forth. More complex algorithms have also been
developed with GraphBLAS primitives. For example, the
high-performance Markov clustering algorithm
(HipMCL)~\cite{azad2018hipmcl} that is used to cluster
large-scale protein-similarity networks is centered 
around a distributed-memory SpGEMM algorithm.

\begin{figure}
    \centering
    \includegraphics[width=\linewidth]{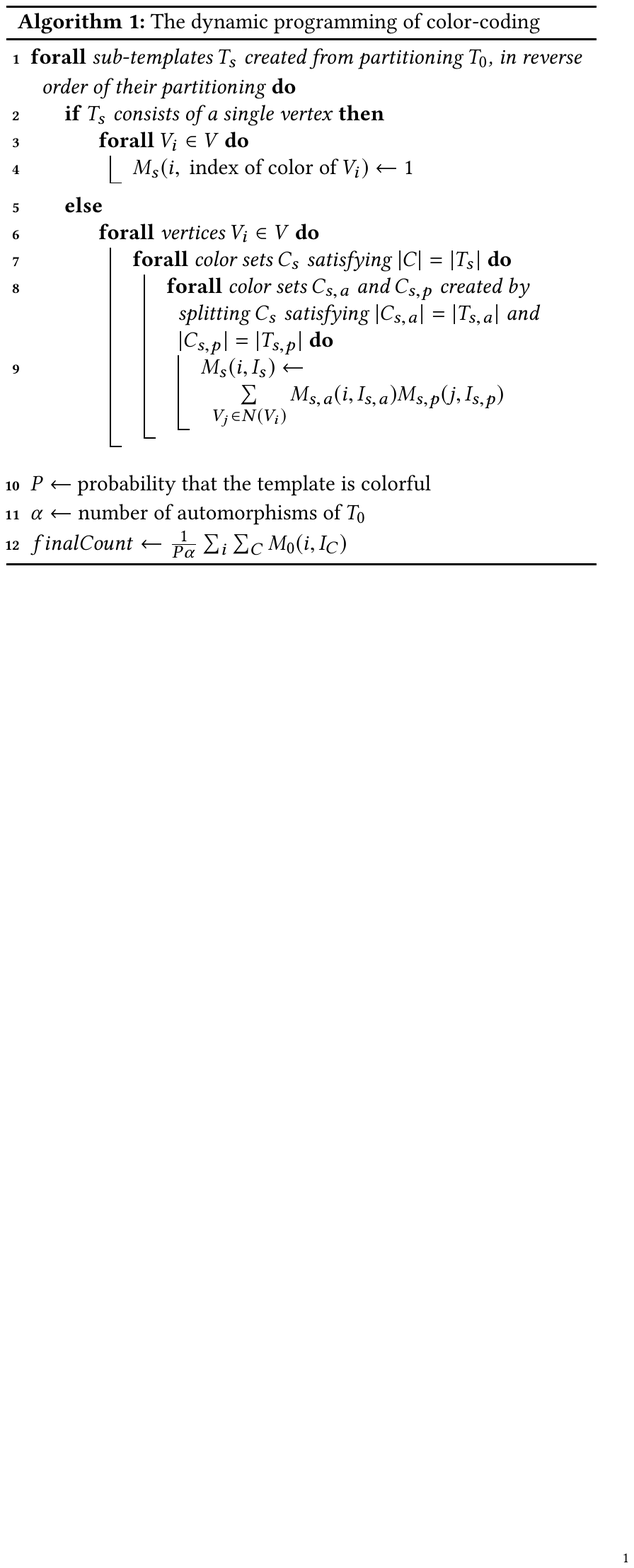}
\end{figure}
\setcounter{figure}{1}

\section{Color coding in Shared-Memory System}
\label{sec:parallel_colorcoding}
Algorithm 1 introduces the implementation of a single iteration of a color-coding algorithm within a multi-threading and shared-memory system, which is adopted by state-of-the-art work like \fasciaimpl{} ~\cite{SlotaFastApproximateSubgraph2013}. We refer to it as the \emph{\fasciaimpl{}} color coding for the rest of the paper, and we will address its performance issues from both algorithmic level and system implementation level. \par 
\begin{figure}[ht]
    \centering
    \includegraphics[width=\linewidth]{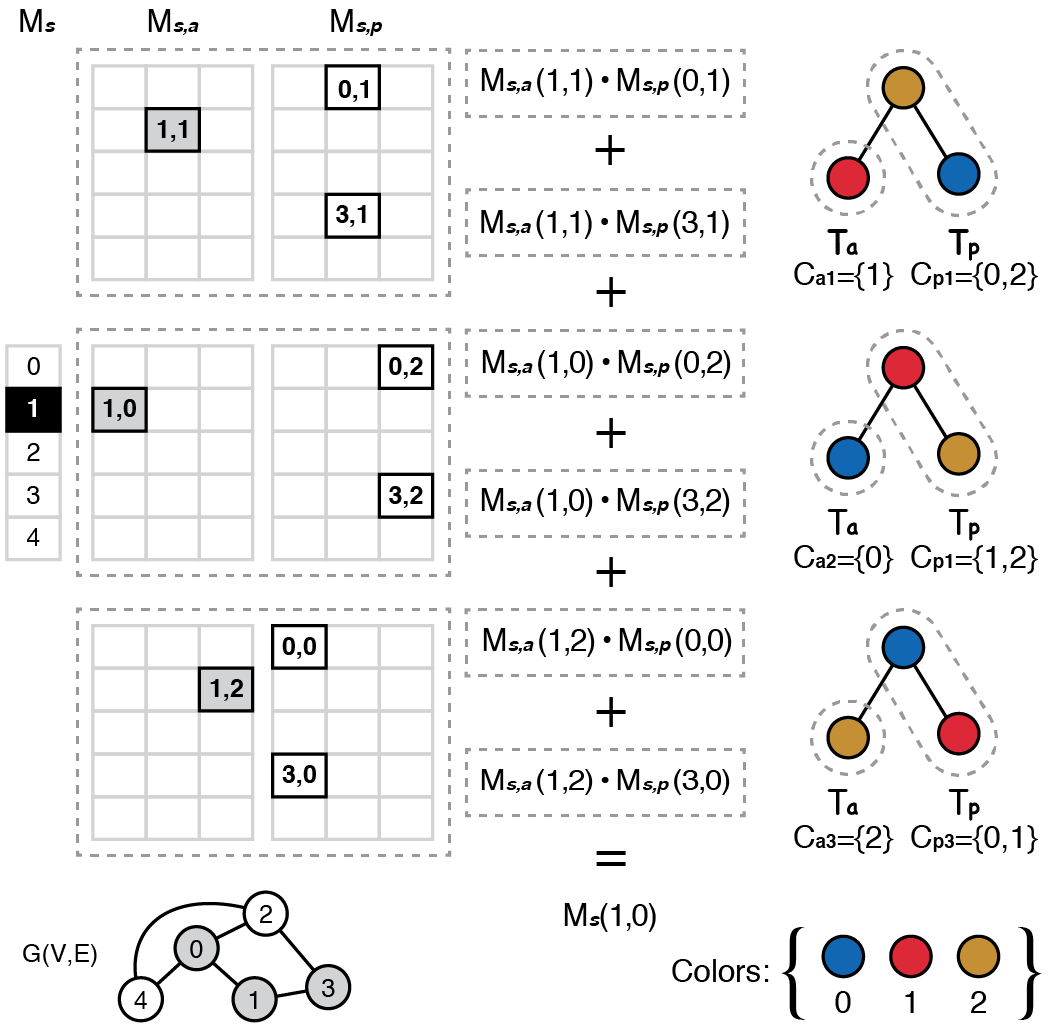}
    \caption{Illustrate Line 7 to 13 of Algorithm 1) by
    a five-vertex $G=(V,E)$ and a three-vertex template $T$. The column indices in $M_{s,a}$ and $M_{s,p}$ are
    calculated from their color combinations by Equation~\ref{eq:index}.}
    \label{fig:illustration-baseline-steps}
\end{figure}
A first \emph{for} loop over $T_s$ at line 1 implements the dynamic programming technique introduced in Section~\ref{sub:subgraph_counting},  
a second \emph{for} loop over all
$v_i\in G=(V,E)$ at line 6 is parallelized by threads. Hence, each thread processes the workload associated with one vertex at a time, and the workload of each thread includes another three \emph{for} loops: 
1) line 7 is a \emph{for} loop over all the color combination occurrences $C_s$, satisfying $|C_s| = |T_s|$; 
2) line 8 describes a loop over all the color set splits $C_{s,a}$ and $C_{s,p}$, where $|C_{s,a}|$, $|C_{s,p}|$ equals the sizes of partitioned active and passive children of template $T_s$ in Section~\ref{sub:subgraph_counting}; 
3) line 9 loops over all of the neighbor vertices of $V_i$ to multiply $N(V_i, T_{s,a}, C_{s,a})$ by $N(V_j,T_{s,p},C_{s,p})$. 
The \fasciaimpl{} color coding uses data structures 
as follows: 1) using an adjacency list $Adj[][]$ to hold the vertex IDs of each $V_i$'s neighbors, i.e., 
$Adj[i][]$ stores $V_j \in N(V_i)$; 2) Using a map to 
record all the sub-templates $T_s$ partitioned from $T$ in pairs $(s,T_s)$.
3) Using an array of dense matrix $M_s$ to hold the 
count data for each sub-template $T_s$. We have: 
\begin{itemize}
    \item Row $i$ of $M_s$ stores count data for each $C_s$ associated to a certain $V_i$
    \item Column $j$ in $M_s$ stores count data of all $V_i$ associated to a certain color combination $C_s$. 
    \item $M_s$ has $|V|$ rows and $\binom{|T|}{|T_s|}$ columns
\end{itemize}
We suppose that $T_s$ has an active 
child $T_{s,a}$ and a passive child $T_{s,p}$.
To generate an integer index for each color set $C_s$ in $M_s$, \cite{SlotaFastApproximateSubgraph2013} proposes an 
index system that hashes an arbitrary color set $C$ with arbitrary size at line 7 of Algorithm 1 into an unique 32-bit integer index according to 
Equation~\ref{eq:index}.  
\begin{equation}
\label{eq:index}
    I_c = \binom{c_1}{1} + \binom{c_2}{2} + \cdots + \binom{c_h}{h}
\end{equation}
Here, we have $h=|C|$ integers (color values) satisfying 
$0\leq c_1 <c_2 <\dots<c_h\leq k-1$, where $h\leq k$. 
We illustrate the major steps of
Algorithm 1 in Figure~\ref{fig:illustration-baseline-steps}, where a vertex $V_1$ is updating its count value from three color combinations for $T_{s,a}$ and $T_{s,p}$. Equation~\ref{eq:index}
calculates the column index value for
$T_{s,a}$ to access the its data in $M_{s,a}$ and $T_{s,p}$ to access the its data in $M_{s,p}$, respectively. Since $V_1$ has two neighbors of
$V_0, V_3$, for each $C_s$, it requires $M_{s,a}(1, I_{s,a})$ to multiply $M_{s,p}(0,I_{s,p})$ and $M_{s,p}(2,I_{s,p})$ accordingly.

\begin{table}[ht]
    \centering
    \caption{Definitions and Notations for Color coding}
    \label{tab:notation}
\resizebox{\linewidth}{!}{%
    \begin{tabular}{cc}
    \toprule
     Notation & Definition  \\
    \midrule
    $G(V,E)$, $T$ & Network and template \\ 
    $n$,$k$ & $n=|V|$ is vertex number, $k=|T|$ is template size \\
    $A_G$ & Adjacency matrix of $G(V,E)$ \\
    $T_s$, $T_{s,a}$, $T_{s,p}$ & The $s$th sub-template, active child of $T_s$, passive child of $T_s$ \\
    $C_s$, $C_{s,a}$, $C_{s,p}$ & Color set for $T_s$, $T_{s,a}$, $T_{s,p}$ \\ 
    $N(V_i,T_s,C_s)$ & Count of $T_s$ colored by $C_s$ on $V_i$ \\ 
    $M_s$, $M_{s,a}$, $M_{s,p}$ & Dense matrix to store counts for $T_s$, $T_{s,a}$, $T_{s,p}$ \\
    $I_s$, $I_{s,a}$, $I_{s,p}$ & Column index of color set
    $C_s$, $C_{s,a}$, $C_{s,p}$ in $M_s$, $M_{s,a}$, $M_{s,p}$ \\ 
    \bottomrule
    \end{tabular}
}
\end{table}

\subsection{Redundancy in Traversing Neighbor Vertices}
\label{sub:redundancy-nb-looping}

The first performance issue we observed in Algorithm 1 exists at the two-fold $\emph{for}$ loops from line 7 to line 9. When the sub-template size $|T_s| \le |T|$, it is probable to have multiple color combinations, where the passive children hold the same color set $C_{s,p}$ (i.e., the same column index) in $M_{s,p}$ while the active children have different color sets $C_{s,a1}, C_{s,a2}, \dots, C_{s,al}$, where $l=\binom{|T|-|T_{s,p}|}{|T_s|-|T_{s,p}|}$. 
Except for the last step of dynamic programming where $|T_s|=|T|$ and $l=1$, the repeated access to $M_{s,p}(:,I_{s,p})$ is considered to be redundant. 
\begin{figure}[ht]
    \centering
    \includegraphics[width=\linewidth]{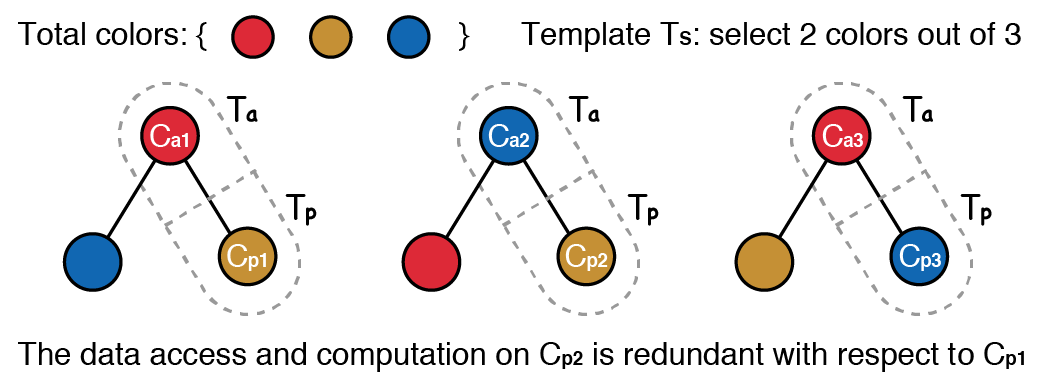}
    \caption{Identify the redundancy of \fasciaimpl{} color coding in a two-vertex sub-template $T_s$,
    which is further split into an active child and a passive child.}
    \label{fig:baseline-redundancy}
\end{figure}

In Figure~\ref{fig:baseline-redundancy}, we illustrate this redundancy by a $T_s$ with two colors taken out of three. Supposing $T_s$ has an active child $T_{s,a}$ with one color and a passive child $T_{s,p}$ with one color. The data access to $C_{s,p2}$ is redundant with respect to that of $C_{s,p1}$ because they share the same color of green.
This redundant data access on passive children is non-trivial because it loops over all neighbors of $V_i$, where data locality is usually poor. 
Therefore, we develop a pruning technique in Section~\ref{subs:pruning_combs} to address this issue.
 
\subsection{Lack of Vectorization and Data Locality}
\label{sub:lack_vec_locality}

Although $M_s$ stores its values in a dense matrix, the index system of Equation~\ref{eq:index} does not guarantee that the looping over $C_{s,a}$ and $C_{s,p}$ at line 8 of Algorithm 1 would have contiguous column indices at $M_{s,a}$ and $M_{s,p}$.
For instance, in Figure~\ref{fig:illustration-baseline-steps}, $I_{s,ca1}=1$, $I_{s,ca1}=0$ while $I_{s,ca3}=2$. 
Hence, the work by a thread on each row of $M_{s,a}$ and $M_{s,p}$ cannot be vectorized because of this irregular access pattern.
Also, this irregular access pattern compromises the data locality of cache usage. 
For example, a cache line that prefetches a 64-Byte chunk from memory address starting at $\&M_{s,a}(1,1)$ and ending at $\&M_{s,a}(1,16)$ cannot serve the request to access $M_{s,a}(1,0)$. 
It is even harder to cache data access to $M_{s,p}$ because they belong to different rows (neighbor vertices). We shall address this issue by re-designing data structure and thread execution order in Section~\ref{sub:precompute_neighbor} to \ref{sub:vec-thd-workflow}.

\section{Design and Implementation}
\label{sec:complana_matgb}

To resolve the performance issues in Section~\ref{sub:redundancy-nb-looping} and \ref{sub:lack_vec_locality}, we first identify and prune the unnecessary computation and memory access in Algorithm 1. Secondly, we modify the underlying data structure, the thread execution workflow, and provide a fully-vectorized implementation by using linear algebra kernels. 

\subsection{Pruning Color Combinations}
\label{subs:pruning_combs}
To remove the redundancy observed in Section~\ref{sub:redundancy-nb-looping}, we first apply the transformation by Equation~\ref{eq:transform_traversal} using the distributive property of addition and multiplication,
\begin{equation}
\label{eq:transform_traversal}
    \sum_{V_j \in N(i)}M_{s,a}(i, I_{s,a})\cdot M_{s,p}(j, I_{s,p}) = M_{s,a}(i, I_{s,a})\sum_{V_j \in N(i)}
    M_{s,p}(j, I_{s,p})
\end{equation}
 where it adds up all the $M_{s,p}(j,I_{s,p})$ before 
multiplying $M_{s,a}(i, I_{s,a})$ while keeping the same arithmetic result. The implementation will be illustrated in detail in Figure~\ref{fig:prunFascia}.
\begin{figure}[ht]
    \centering
    \includegraphics[width=\linewidth]{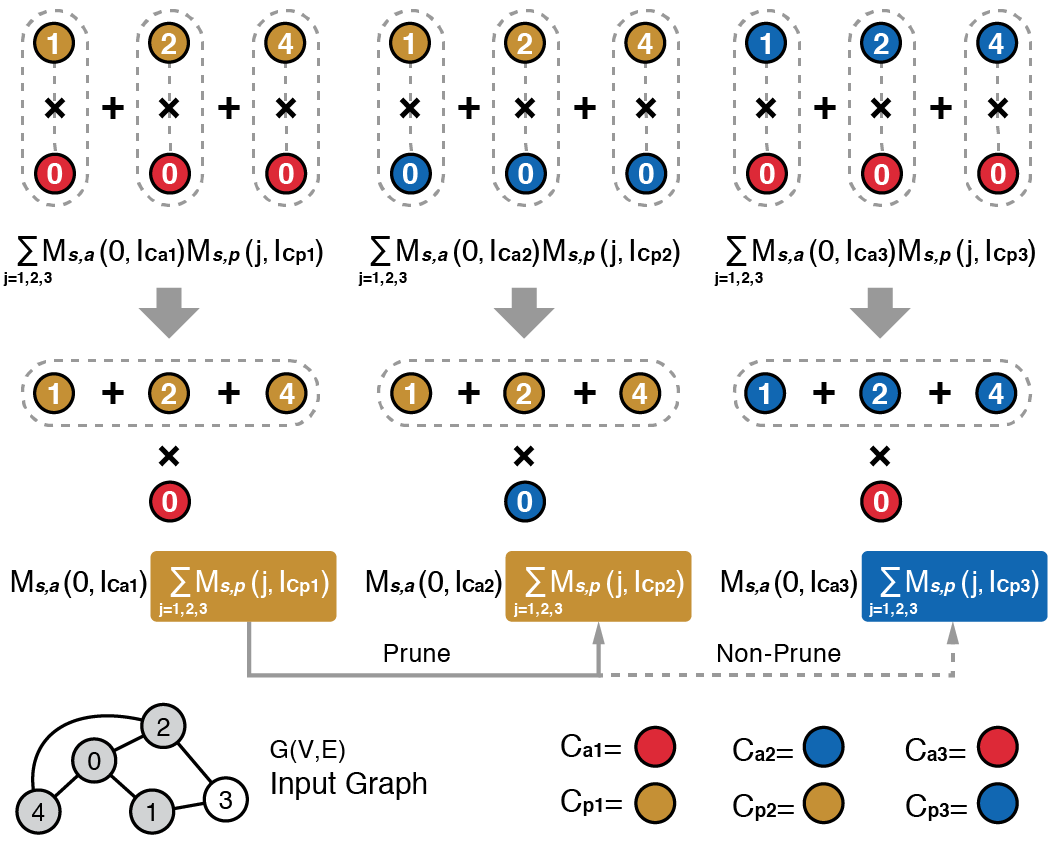}
    \caption{Illustrate steps of pruning optimization 
    from vertex 0 in $G=(V,E)$ with the three color
    combinations shown in Figure~\ref{fig:baseline-redundancy}. 1) Re-order addition and multiplication (grey arrow); 2): prune the vertex neighbor summation. 
    }
    \label{fig:illustration-pruning-redundancy}
\end{figure}

 Secondly, we 
check whether multiple color combinations share the same 
$C_{s,p}$ and prune them by re-using the result from its 
first occurrence. In Figure~\ref{fig:illustration-pruning-redundancy}, we examine a case with a sub-template $T_s$ choosing two colors out of three. In this case, 
$T_s$ is split into an active child $T_{s,a}$ with one vertex and a passive child $T_{s,p}$ with one vertex. Obviously, the neighbor traversal 
of vertex $V_0$ for $C_{s,a2},C_{s,p2}$ is pruned by 
using the results from $\{C_{s,a1}, C_{s,p1}\}$.

\subsection{Pre-compute Neighbor Traversal}
\label{sub:precompute_neighbor}
\begin{figure*}[ht]
    \centering
    \includegraphics[width=\textwidth]{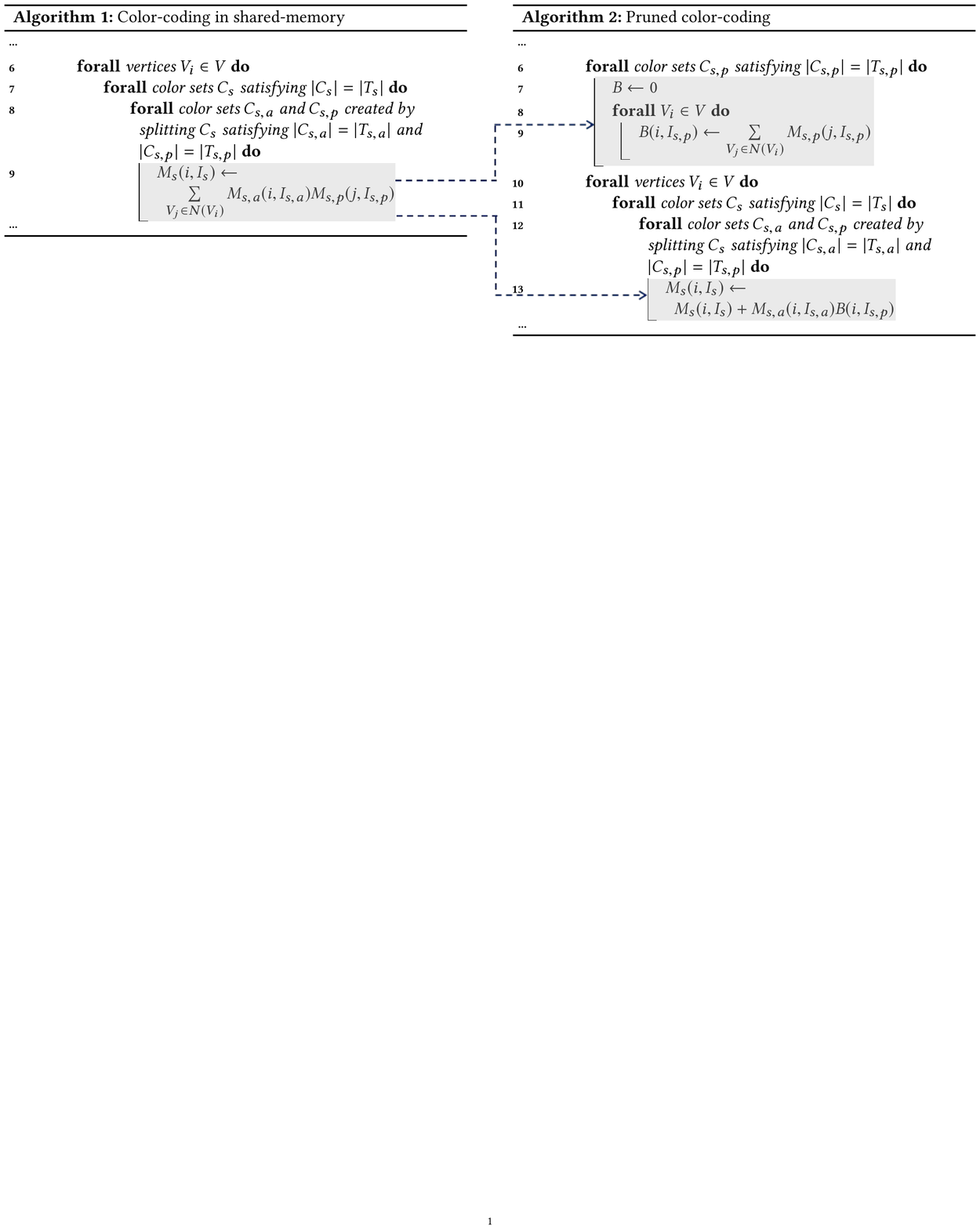}
    \caption{Pre-compute the pruned vertex-neighbor traversal by: 1) stripping out the summation of count data $M_{s,p}$ from its multiplication by count data from $M_{s,a}$; 2) storing the summation value of $M_{s,p}$ 
    and looking it up before multiplying count data from $M_{s,a}$}
    \label{fig:prunFascia}
\end{figure*}

Furthermore, the traversal of neighbors to sum up $M_{s,p}$ counts is stripped out from the 
\emph{for} loop at line 2 of Algorithm 1 as a pre-computation module shown in Figure~\ref{fig:illustration-pruning-redundancy}. We use an array buffer of length $|V|$ to temporarily hold the summation results for a certain $C_{s,p}$ across all $V_i$ 
with $SumBuf[i] = \sum_{V_j \in N(i)}M_{s,p}(j, I_{s,p})$. After the pre-computation for $C_{s,p}$, 
we write the content of buffer back to $M_{s,p}(V_i, I_{s,p})$ and release $SumBuf$ for the next $C_{s,p}$. We then replace the calculations of $\sum_{V_j \in N(i)}M_{s,p}(j, I_{s,p})$ by looking up their values from $M_{s,p}(i, I_{s,p})$. Meanwhile, except for the $|V|$-sized
array buffer, we do not increase the memory footprint 
for this pre-computation module.\par

It is worth noting that the pruning of vertex neighbor traversal shall also benefit the performance in a distributed memory system, where the data request to a vertex neighbor would be much expensive than that in a shared-memory system because of explicit interprocess communication.

\subsection{Data Structure for Better Locality}
\label{sub:new-data-structure}
\begin{figure*}[ht]
    \centering
    \includegraphics[width=\linewidth]{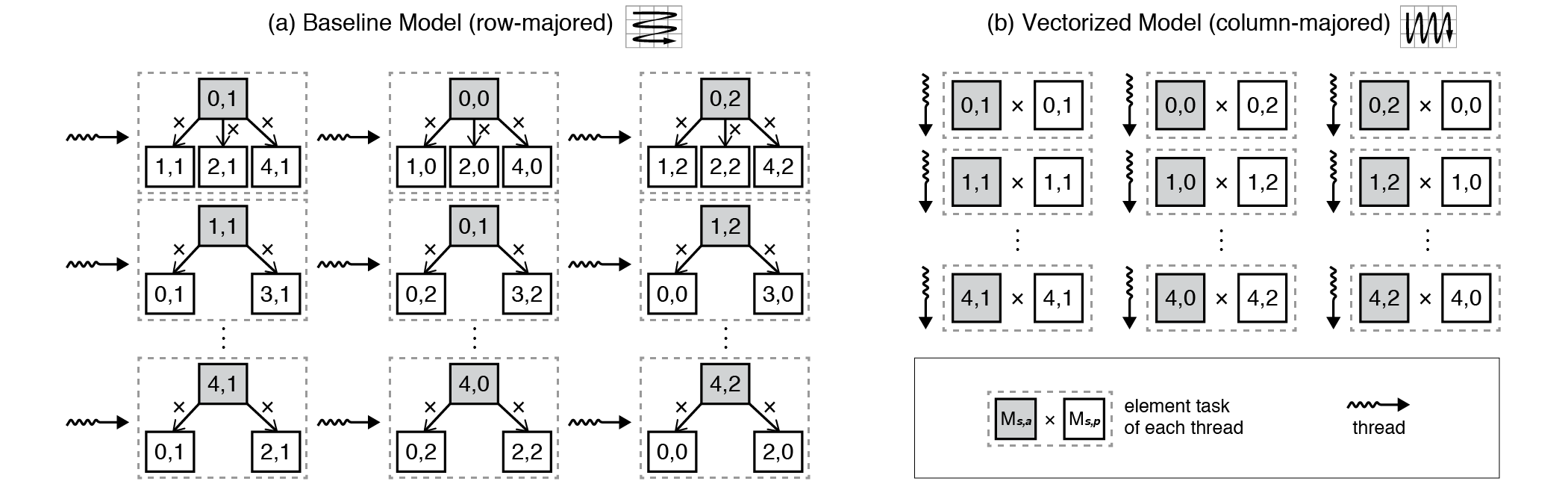}
    \caption{Comparing the thread execution order, where
    (a) \fasciaimpl{} color coding (baseline Model) has count data stored in memory with a row-majored layout, and (b) \ourimpl{} (Vectorized Model) has count data stored in memory with a column-majored
    layout.}
    \label{fig:fig:countTableOrder}
    \label{fig:thd-execution-order}
\end{figure*}

We replace the adjacency list $adj[][]$ of network $G(V,E)$ by using an adjacency matrix 
$A_G$, where $A_G(i,j) = 1$ if $V_j$ is a 
neighbor of $V_i$, and otherwise $A_G(i,j) = 0$. The
adjacency matrix has two advantages in our case: 
\begin{enumerate}
    \item It enables us to represent the neighbor vertex traversal at line 9 of Algorithm 2 by using a matrix operation.
    \item It allows a pre-processing step upon the sparse matrix to improve the data locality of the network if necessary. 
\end{enumerate}
The pre-processing step would be quite efficient in processing an extremely large network with a high sparsity in a distributed-memory system. For example, the Reverse Cuthill-McKee Algorithm (RCM) reduces the communication overhead as well as improves the bandwidth utilization~\cite{AzadReverseCuthillMcKeeAlgorithm2017}. Because the sparse matrix would be re-used by each vertex $V_i$, this additional pre-processing overhead is amortized and neglected in practice. \par

For count tables in $M[]$, we keep the dense matrices but
change the data layout in physical memory from row-majored order in Figure~\ref{fig:thd-execution-order}(a) to column-majored order in Figure~\ref{fig:thd-execution-order}(b). 
The column-majored viewpoint reflects a vector form of count data, i.e., the count data for a certain color combination $C$ is stored in a $|V|$ dimensional vector, which is essential to our next vectorization effort, because all of the count data for a certain color combination $C$ from all vertices $V_i \in G(V,E)$ are now stored in contiguous memory space. 

\subsection{Vectorized Thread Execution and Memory Access}
\label{sub:vec-thd-workflow}

With the new locality-friendly data structure, we vectorize the counting tasks and memory access by re-ordering the thread execution workflow. Here the 
thread is created by users, e.g., we use OpenMP 4.0 to spawn threads.\par

To vectorize the pre-computation of $\sum_{V_j \in N(i)}M_{s,p}(j,w)$ for each vertex $V_i$, we 
extend the buffer in Section~\ref{sub:redundancy-nb-looping} from an $|V|$ dimensional array into a
$|V|\times Z$ matrix $M_{buf}$, where $Z$ is a pre-selected batch size and data is stored in a row-majored order. We have the following procedure:
\begin{enumerate}
    \item For each vertex $V_i$, load the first $Z$ values from row $M_{s,p}(i,:)$ into $M_{buf}(i,:)$. 
    \item All rows (vertices) of $M_{buf}$ are processed by threads in parallel. 
    \item For each row, a thread sequentially loops over $V_j \in N(i)$ to calculate 
    $M_{s,p}(i,z)\gets M_{s,p}(i,z) + M_{buf}(j,z)$.
\end{enumerate}
The batch size value $Z$ shall be a multiple of the SIMD unit length or the cache line size, which 
ensures full utilization of the vector register length and the prefetched data in a cache line. \par

To vectorize the execution of count task at line 9 of Algorithm 1, we change the thread workflow from Figure~\ref{fig:thd-execution-order}(a) to Figure~\ref{fig:thd-execution-order}(b). 
In Figure~\ref{fig:thd-execution-order}(a), the \fasciaimpl{} color coding has each thread process counting
work belonging to a vertex at a time.  
Conversely, in Figure~\ref{fig:thd-execution-order}(b), we characterize the thread workflow as:
\begin{enumerate}
    \item Tasks belong to one color combination but for all vertices in $G(V,E)$, is dispatched altogether to all threads at a time. All threads collaboratively finish these tasks before moving to tasks for the next color combination.
    \item For the tasks from the same color combination, each thread is assigned an equal portion of element tasks with consecutive row numbers in $M_{s,a}$ and $M_{s,p}$.
    \item For each thread, within its portion of element tasks, the calculation is vectorized due to the consecutive row number and the column-majored data storage layout. 
\end{enumerate}
Typically $|V|$ has more than one million vertices in the network and is much larger than the total thread number. Each thread gets a portion of tasks much larger than the SIMD unit length and ensures full utilization of its SIMD resource. Furthermore, regular stride 1 memory access of each thread on $M_{s,a}$ and $M_{s,p}$ enables an efficient data prefetching from memory into cache lines.

\subsection{Exploit Linear Algebra Kernels}
\label{sub:kernel-spmm-ema}
\setcounter{algocf}{2}

\begin{figure}[ht]
\includegraphics[width=\linewidth]{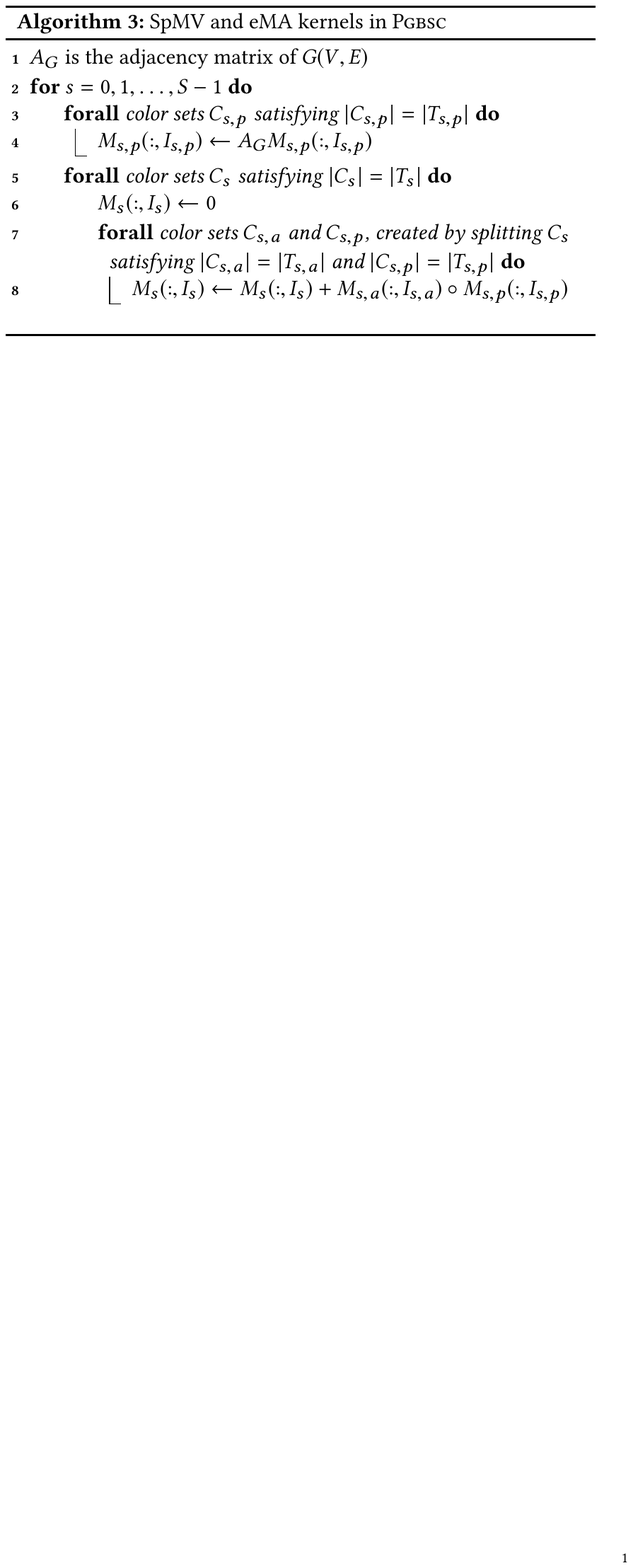}
\end{figure}

According to the GraphBLAS philosophy, the looping over
a vertex's neighbors is represented as a multiplication of a vector by the adjacency matrix. 
Therefore, the looping of neighbors for each $V_i \in V$ for
pre-computing $M_{s,p}$ at line 8,9 of Algorithm 2 is written as $A_G M_{s,p}(:,I_{s,p})$ at line 4 of Algorithm 3, where $A_G$ is the adjacency
matrix notated in Section~\ref{sub:new-data-structure} and
$I_{s,p}$ is the column index for a color set $C_{s,p}$ in $M_{s,p}$. Such multiplication is identified as Sparse Matrix-Vector multiplication (SpMV). 
To save memory space, we store the multiplication
result back to $M_{s,p}(:,I_{s,p})$ with the help of a buffer.
In practice, our \ourimpl{} adopts a Compressed Sparse Column (CSC) format on $A_G$ and the looping over color sets
$C_{s,p}$ to do SpMV at line 3,4 in Algorithm 3 is combined with a sparse matrix dense matrix (SpMM) 
kernel to apply the vectorization in Algorithm 4. \par

\begin{figure}[ht]
\includegraphics[width=\linewidth]{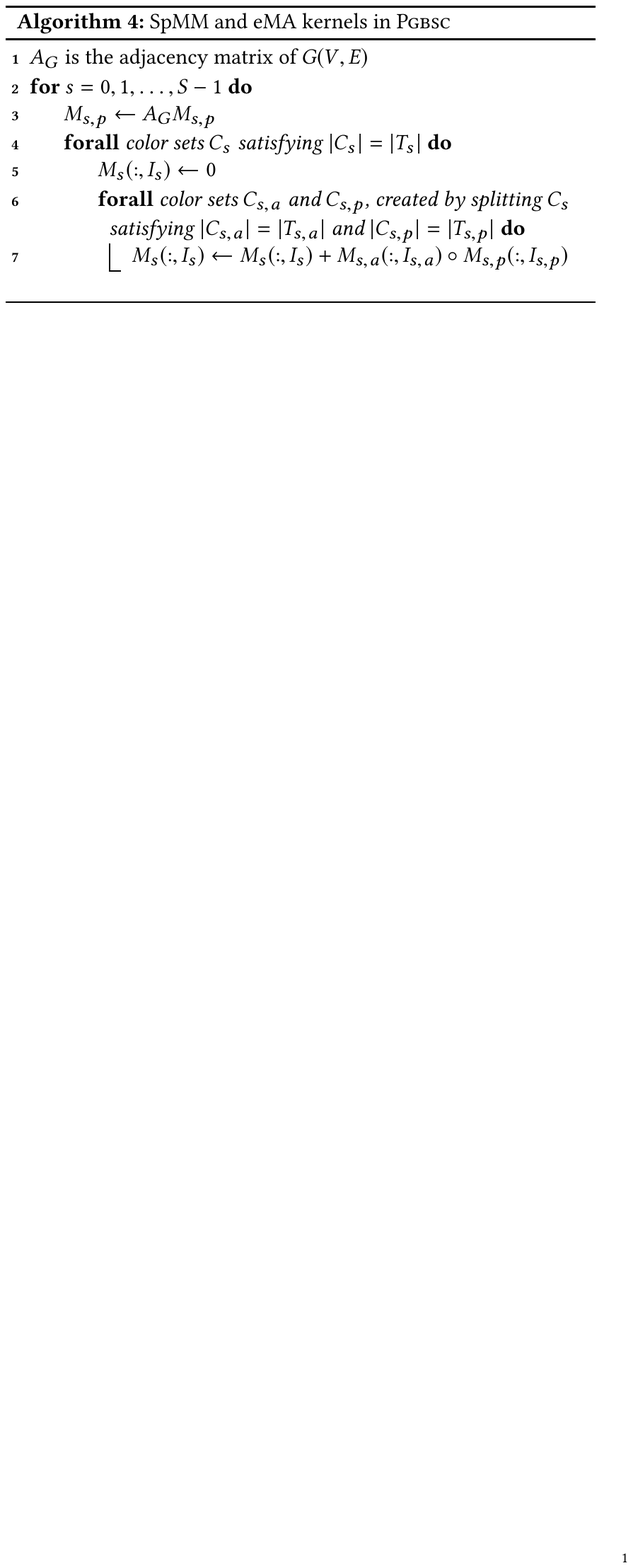}
\end{figure}

%
%
%

The vectorized counting task is implemented by an element-wise vector-vector multiplication and addition (named as eMA) kernel at line 7 of Algorithm 4, where vector $M_s(:,I_{c})$, $M_{s,a}(:,I_{s,a})$, and $M_{s,p}(:,I_{s,p})$
are retrieved from $M_s$, $M_{s,a}$, and $M_{s,p}$, respectively. 
We code this kernel by using Intel (R) AVX intrinsics, where multiplication and addition are implemented by using fused multiply-add (FMA) instruction set that leverages the 256/512-bit vector registers.

\section{Complexity Analysis}
\label{sec:compl-analysis}

In this section, we will first analyze the time complexity of the \ourimpl{} algorithm to show a significant reduction. Then, we will give the formula to estimate the improvement of \ourimpl{} versus \fasciaimpl{}.

\subsection{Time Complexity}
\label{subsec:compl-three}

We assume that the size of the sub-templates is uniformly distributed between $1$ and $n$. We will use two matrix/vector operations, sparse matrix dense matrix multiplication (SpMM) and element-wise multiplication and addition (eMA). The time complexity of SpMV is considered as $|E|$, and the time complexity of eMA is $|V|$.

\begin{lemma}
For a template or a sub-template $T_s$, if the counting of the sub-templates of $T_s$ has been completed, then the time complexity of counting $T_s$ is: 
\begin{equation}
O(|E|\binom{|T|}{|T_{s,p}|}  +|V| \binom{|T|}{|T_s|} \binom{|T_s|}{|T_{s,p}|}).
\end{equation}

\end{lemma}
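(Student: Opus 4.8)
The plan is to split the cost of counting $T_s$ into exactly the two kernel phases prescribed by Algorithms 3 and 4, and to show that each phase contributes one of the two summands in the bound. Phase one is the neighbor-traversal pre-computation, realized as the SpMV products $A_G\,M_{s,p}(:,I_{s,p})$; phase two is the element-wise accumulation (eMA) that combines the active-child counts with the pre-summed passive-child counts into $M_s$. Because the sub-templates of $T_s$ are assumed already counted, both $M_{s,a}$ and $M_{s,p}$ are available as inputs, so the only work charged to $T_s$ is these two phases. Using the stated per-call costs---$|E|$ for one SpMV and $|V|$ for one eMA---it then suffices to count how many calls of each kernel the algorithm issues.

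For the SpMV term I would count the distinct passive color sets. The decisive point is the pruning transformation of Equation~\ref{eq:transform_traversal}: after factoring the active count out of the neighbor sum, the quantity $\sum_{V_j\in N(i)}M_{s,p}(j,I_{s,p})$ depends only on the passive color set $C_{s,p}$, not on how $C_{s,p}$ was paired with an active set. Hence the summation (one SpMV over all rows) is performed once per distinct $C_{s,p}$. Since $C_{s,p}$ is any size-$|T_{s,p}|$ subset of the $|T|$ available colors, the number of distinct passive color sets is $\binom{|T|}{|T_{s,p}|}$, and multiplying by the SpMV cost $|E|$ yields the first summand $|E|\binom{|T|}{|T_{s,p}|}$.

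For the eMA term I would count the (color set, split) pairs driving the accumulation loop. The outer loop ranges over every color set $C_s$ assignable to $T_s$, of which there are $\binom{|T|}{|T_s|}$ since $C_s$ is a size-$|T_s|$ subset of the $|T|$ colors. For each fixed $C_s$, the split into an active part $C_{s,a}$ of size $|T_{s,a}|$ and a passive part $C_{s,p}$ of size $|T_{s,p}|$ is determined by choosing which $|T_{s,p}|$ colors of $C_s$ go to the passive child, giving $\binom{|T_s|}{|T_{s,p}|}$ splits. Each such pair triggers one eMA of cost $|V|$, so the second phase costs $|V|\binom{|T|}{|T_s|}\binom{|T_s|}{|T_{s,p}|}$; summing the two phases gives the claimed bound.

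The main obstacle is not the arithmetic but justifying the separation of scales that pruning buys. I must argue carefully that the expensive, locality-poor neighbor traversal is amortized to $\binom{|T|}{|T_{s,p}|}$ SpMV calls rather than being repeated for each of the $\binom{|T|}{|T_s|}\binom{|T_s|}{|T_{s,p}|}$ (color set, split) pairs---this is precisely where Equation~\ref{eq:transform_traversal} and the write-back into $M_{s,p}(:,I_{s,p})$ are used, and it is what makes the first summand carry $|T_{s,p}|$ rather than $|T_s|$ in the binomial. A secondary check is that the bookkeeping overheads---computing the index $I_{s,p}$ via Equation~\ref{eq:index} and managing the length-$|V|$ buffer---are $O(|V|)$ per call and therefore absorbed into the two dominant terms, so that no third summand is needed.
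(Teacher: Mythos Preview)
Your proposal is correct and follows essentially the same two-phase decomposition as the paper: count $\binom{|T|}{|T_{s,p}|}$ SpMV calls at cost $|E|$ each for the passive-child pre-computation, then $\binom{|T|}{|T_s|}\binom{|T_s|}{|T_{s,p}|}$ eMA calls at cost $|V|$ each for the two-level loop over color sets and splits. Your version is more detailed in justifying why pruning via Equation~\ref{eq:transform_traversal} limits the SpMV count to distinct $C_{s,p}$'s and in checking that bookkeeping is absorbed, but the underlying argument is identical to the paper's.
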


\begin{proof}
Since $T_{s,p}$ has $\binom{|T|}{|T_{s,p}|}$ color combinations, which requires $\binom{|T|}{|T_{s,p}|}$ times SpMV operations. The total time consumption of this step is $O(|E|\binom{|T|}{|T_{s,p}|})$. 
In the next step,  $T_s$ has a total of $\binom{|T|}{|T_s|}$ different color combinations, and its sub-templates have a total of $\binom{|T_s|}{|T_{s,p}|}$  color combinations, a two-layer loop is needed here. The total time consumption of this step is $O(|V| \binom{|T|}{|T_s|} \binom{|T_s|}{|T_{s,p}|}) $. 
\end{proof}

\begin{lemma}\label{lem:ineq}
For integers $l \le m \le n$, and $l_0 \le 0, l_1 \le 1, \dots , l_n \le n,$ the following equations hold:

\begin{enumerate}
  \item $\max_{m} \binom{n}{m} = O(n^{-1/2}2^n)$
   \item $\max_{\{ m,l\}} \binom{n}{m} \binom{m}{l} = O(n^{-1}3^n)$
  \item $ \max_{\{ l_0, l_1,\dots ,l_n \}}\sum_{m=0}^{n}\binom{n}{m}\binom{m}{l_m} = O(n^{-1/2}3^n)$
\end{enumerate}
\end{lemma}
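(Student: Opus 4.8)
The plan is to treat all three parts with a single toolkit: Stirling's formula to size the extremal coefficient, the (multi)nomial theorem to evaluate the relevant total masses, and a Chernoff-type tail bound wherever a \emph{weighted} sum must be shown to localize. Part (1) is the warm-up and feeds into part (3).

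For part (1) I would use that $\binom{n}{m}$ is unimodal and symmetric in $m$, so its maximum is the central coefficient $\binom{n}{\lfloor n/2\rfloor}$. Applying Stirling's approximation $n! = \sqrt{2\pi n}\,(n/e)^n(1+o(1))$ to $\binom{n}{\lfloor n/2\rfloor}$ gives $\sqrt{2/(\pi n)}\,2^n(1+o(1))$, which is $O(n^{-1/2}2^n)$. For part (2) the key observation is the identity $\binom{n}{m}\binom{m}{l} = n!/(l!\,(m-l)!\,(n-m)!)$: the product is exactly the trinomial coefficient indexed by the composition $(l,\,m-l,\,n-m)$ of $n$. Thus maximizing over $(m,l)$ is the same as maximizing a multinomial coefficient over all compositions of $n$ into three nonnegative parts, whose maximum is attained at the balanced composition with each part $\approx n/3$. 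A second application of Stirling to $n!/((n/3)!)^3$ yields $\Theta(n^{-1}3^n)$, and in particular the $O(n^{-1}3^n)$ upper bound. (I would note that bounding the max by the average fails to pin down the $n^{-1}$ factor from above, since there are $\Theta(n^2)$ terms; the Stirling computation is what gives the correct polynomial factor.)

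For part (3) I would first dispose of the inner maximization: for each fixed $m$ the factor $\binom{m}{l_m}$ is largest at $l_m=\lfloor m/2\rfloor$ by part (1), so the quantity in question equals $S_n := \sum_{m=0}^n \binom{n}{m}\binom{m}{\lfloor m/2\rfloor}$. Invoking part (1) in the uniform form $\binom{m}{\lfloor m/2\rfloor} \le C\,2^m/\sqrt{m+1}$, I would split $S_n$ at the threshold $m_0=\lfloor n/2\rfloor$. On the upper range $m\ge m_0$ the weight obeys $1/\sqrt{m+1}=O(n^{-1/2})$, so that portion is at most $O(n^{-1/2})\sum_m\binom{n}{m}2^m = O(n^{-1/2}3^n)$, using $\sum_m\binom{n}{m}2^m=3^n$ from the binomial theorem. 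On the lower range $m<m_0$ I would bound $1/\sqrt{m+1}\le 1$ and show $\sum_{m<n/2}\binom{n}{m}2^m$ is exponentially smaller than $3^n$: writing $\binom{n}{m}2^m = 3^n\,\Pr[X=m]$ with $X\sim\mathrm{Bin}(n,2/3)$ (mean $2n/3$), a Chernoff lower-tail bound gives $\Pr[X<n/2]\le e^{-cn}$, so this contribution is $O(e^{-cn}3^n)=o(n^{-1/2}3^n)$.

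The main obstacle is precisely this last tail estimate. The naive route $\binom{m}{\lfloor m/2\rfloor}\le\binom{n}{\lfloor n/2\rfloor}$ combined with $\sum_m\binom{n}{m}=2^n$ yields only $O(n^{-1/2}4^n)$, which is far too weak, so the $n^{-1/2}$ savings must come from genuine concentration of the tilted distribution $\binom{n}{m}2^m/3^n$ around $m\approx 2n/3$. I therefore have to confirm that the small-$m$ region, which is exactly where the weight $1/\sqrt{m+1}$ is largest and thus most threatening, carries only exponentially negligible mass; the Chernoff bound is what makes this rigorous, and choosing the split point away from both the dangerous small-$m$ zone and the mode is the one place where some care is needed.
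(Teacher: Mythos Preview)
Your argument is correct in all three parts. Part (1) via Stirling on the central coefficient, part (2) via the trinomial identity $\binom{n}{m}\binom{m}{l}=n!/\bigl(l!\,(m-l)!\,(n-m)!\bigr)$ and Stirling at the balanced composition, and part (3) via the split at $m_0=\lfloor n/2\rfloor$ combined with the Chernoff tail bound on $\mathrm{Bin}(n,2/3)$ all go through as you describe. The one point to make fully explicit is that the inequality $\binom{m}{\lfloor m/2\rfloor}\le C\,2^m/\sqrt{m+1}$ holds with a \emph{single} constant $C$ valid for all $m\ge 0$ (not merely asymptotically); this is standard but should be stated, since it is what licenses pulling $C$ outside the sum.

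As for comparison with the paper: the paper states this lemma without proof, so there is no argument of the authors' to compare against. Your approach is a standard and clean way to establish all three bounds; in particular, the probabilistic rewriting $\binom{n}{m}2^m=3^n\Pr[X=m]$ with $X\sim\mathrm{Bin}(n,2/3)$ is exactly the right device to show that the small-$m$ range is negligible, and your remark that the naive bound $\binom{m}{\lfloor m/2\rfloor}\le\binom{n}{\lfloor n/2\rfloor}$ only yields $O(n^{-1/2}4^n)$ correctly identifies why some concentration input is genuinely needed.
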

\begin{lemma}
In the worst case, the total time complexity of counting a $k$-node template using \ourimpl{} is 
\begin{equation}
O((e^k\log{(\frac{1}{\delta})} \frac{1}{\epsilon^2}) (|E| 2^k  + |V| k^{-1/2}3^k)).
\end{equation}
\end{lemma}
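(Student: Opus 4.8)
The plan is to write the total running time as a product of two independent factors: the number of random colorings (trials) required for an $(\epsilon,\delta)$-approximation, and the cost of a single trial. The single-trial cost I would obtain by summing the per-sub-template bound of Lemma 1 over the entire partition of $T$. Concretely, I first fix one coloring and bound the work of counting $T$ in it. Using the stated assumption that the sub-template sizes are uniformly distributed over $\{1,\dots,k\}$, I treat the partition as containing one sub-template $T_s$ of each size $m\in\{1,\dots,k\}$, whose passive child $T_{s,p}$ has some size $l_m\le m$. Summing Lemma 1 over these sub-templates then splits the single-trial cost into an SpMV (edge) term $|E|\sum_{m}\binom{k}{l_m}$ and an eMA (vertex) term $|V|\sum_{m}\binom{k}{m}\binom{m}{l_m}$.

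For the eMA term the bound is immediate from Lemma~\ref{lem:ineq}: the sum $\sum_{m=1}^{k}\binom{k}{m}\binom{m}{l_m}$ is exactly the quantity maximized in part (3), so it is $O(k^{-1/2}3^k)$, giving $O(|V|k^{-1/2}3^k)$ after the factor $|V|$. For the SpMV term I would argue that, because the pre-computation and pruning of Section~\ref{sub:precompute_neighbor} evaluate each passive color combination only once, the relevant sum runs over the \emph{distinct} passive sizes in the partition; bounding it by the full binomial sum $\sum_{l=0}^{k}\binom{k}{l}=2^k$ yields $O(|E|2^k)$. Adding the two terms gives a single-trial cost of $O(|E|2^k+|V|k^{-1/2}3^k)$.

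It then remains to count the trials, for which I would invoke the standard color-coding approximation analysis. A fixed embedding of a $k$-node tree is colorful with probability $k!/k^k$, so by Stirling $1/(k!/k^k)=O(e^k)$ colorings suffice in expectation to realize a given copy, and a Chebyshev/Hoeffding-plus-median argument over independent colorings inflates this by a factor $O(\epsilon^{-2}\log(1/\delta))$ to guarantee relative error $\epsilon$ with confidence $1-\delta$. Multiplying the trial count $O(e^k\epsilon^{-2}\log(1/\delta))$ by the single-trial cost produces the claimed bound.

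The main obstacle is the SpMV term. The naive worst case over the passive sizes $\{l_m\}$ concentrates them near $k/2$ and yields $\sum_{m}\binom{k}{l_m}=\Theta(k^{1/2}2^k)$, a factor $\sqrt{k}$ larger than the stated $O(2^k)$. Obtaining $O(|E|2^k)$ therefore hinges on the structural claim that, after pruning, each color subset is charged to at most one passive-child SpMV, so that the passive color combinations are effectively distinct and the binomial theorem applies. I would make this sharing argument precise first, since once it is in hand the remainder is a routine substitution of Lemma~\ref{lem:ineq} together with the color-coding trial count.
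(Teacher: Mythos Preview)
Your proposal follows essentially the same route as the paper: factor the total cost as (number of colorings for an $(\epsilon,\delta)$-approximation) $\times$ (cost of one coloring), then bound the latter by summing Lemma~1 over the $O(k)$ sub-templates and invoking Lemma~\ref{lem:ineq}. The paper's proof is in fact terser than yours---it simply states that there are $O(k)$ sub-templates, cites the $O(e^k\epsilon^{-2}\log(1/\delta))$ iteration count, and says ``the conclusion is proved by combining Lemma~\ref{lem:ineq}''---so the $\sqrt{k}$ discrepancy you flag in the SpMV term is not addressed there either; your structural sharing argument is additional care beyond what the paper supplies.
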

\begin{proof}
A template with $k$ nodes generates up to $O(k)$ sub-templates. 
And $O (e^k\log{(\frac{1}{\delta}})\frac{1}{\epsilon^2})$ iterations are performed in order to get the $(\epsilon,\delta)$-approximation. The conclusion is proved by combining Lemma~\ref{lem:ineq}. 
\end{proof}

\begin{table}[ht]
\centering
\caption{Comparison with Original Subgraph Counting}
\label{tab:theoretical_complexity}
\resizebox{\linewidth}{!}{%
\begin{tabular}{ccc} 
 \toprule
  \thead{\normalsize Time \\ \normalsize complexity} & One sub-template & One iteration\\ 
   \midrule
 \textsc{Fascia} & $O(|E| \binom{|T|}{|T_s|} \binom{|T_s|}{|T_{s,p}|})$ & $O(|E| \cdot k^{-1/2}3^{k})$ \\ 
 \textsc{Pfascia} & \thead{\normalsize $O(|E|\binom{|T|}{|T_{s,p}|} +$ \\ \normalsize $ |V| (\binom{|T|}{|T_s|} \binom{|T_s|}{|T_{s,p}|} +\binom{|T|}{|T_{s,p}|}))$} & $O(|E| 2^{k}  + |V| k^{-1/2}3^{k}$)  \\ 
 \textsc{Pgbsc} & \thead{\normalsize$O(|E|\binom{|T|}{|T_{s,p}|}+$\\\normalsize$|V| \binom{|T|}{|T_s|} \binom{|T_s|}{|T_{s,p}|})$ }& $O(|E|  2^{k} + |V|  k^{-1/2}3^{k})$  \\
 \bottomrule
\end{tabular}
}
\end{table}

\subsection{Estimation of Performance Improvement}
\label{sub:improve_estimation}

In this section, we present a model to estimate the run time of \ourimpl{} and its improvement over the original algorithm.

\paragraph{Run time}
From the previous analysis, we know that the execution time of \ourimpl{} on a single sub-template is $O(|E|\binom{|T|}{|T_{s,p}|}  +|V| \binom{|T|}{|T_s|} \binom{|T_s|}{|T_{s,p}|})$. By supplementing the constant term, we get 
\begin{equation}
 runtime_{PGBSC}= \alpha |E|\binom{|T|}{|T_{s,p}|}  +\beta |V| \binom{|T|}{|T_s|} \binom{|T_s|}{|T_{s,p}|},   
\end{equation}
where $\alpha$ and $\beta$ are constants, and we will calculate them by applying the data fitting. Similarly, we set:
\begin{equation}
 runtime_{FASCIA}= \gamma |E| \binom{|T|}{|T_s|} \binom{|T_s|}{|T_{s,p}|}.   
\end{equation}
Through the previous research we obtained that the time complexity of the original color-coding algorithm and \ourimpl{} are $O(|E|k^{-1/2}3^k)$ and $O(|E| 2^k  + |V| k^{-1/2}3^k)$ respectively. Since the $|E| 2^k$ term is very close to the actual value, and the $k^{-1/2}3^k$ term may be overestimated, to be more rigorous, we assume that the actual running time of the two programs is $|E|f(T)$ and $|E| 2^k  + |V|f(T)$ respectively. Here $T$ is the input template, and $f$ is a function for $T$. According to previous analysis, we obtain an upper bound of $f$: 
$ O(f(T)) \le O(k^{-1/2}3^k)$. The lower bound of $f(T)$ is estimated in this form: 
\begin{equation}
 f(T) = \sum_{T_s} {\binom{|T|}{|T_s|} \binom{|T_s|}{|T_{s}|} } \ge \sum_{T_s} \binom{|T|}{|T_s|}|T_s| \approx \sum_{i}i\binom{k}{i}=n\cdot 2^{k-1}   
\end{equation}
Thus, $O(k\cdot 2^k)\le O(f(T)) \le O(k^{-1/2}3^k)$. 
Comparing the complexity of the two algorithms we obtain: 
$Improvement \approx \frac{\gamma|E|f(T)}{\alpha |E| 2^k  + \beta |V| f(T)} = \frac{\gamma}{\alpha\frac{2^k}{f(T)} + \beta|V|/|E|} = \frac{\gamma}{\alpha\frac{2^k}{f(T)}  + \beta d^{-1}}.$


Since the influence of the constant term is not considered in the complexity analysis, we need to fill in the necessary constant terms: 

\begin{equation}
  \gamma (\alpha \cdot 2\cdot k^{-1} +\beta\cdot d^{-1})^{-1} \le Improvement \le \gamma(\alpha \cdot (\frac{2}{3})^k+\beta \cdot d^{-1})^{-1}, 
   \label{eq:speedup_bound}
\end{equation}
where $d$ is the average degree of the network, $\alpha$, $\beta$,and $\gamma$ are constants.

For a given graph, the following implications are obtained from this formula:
\begin{enumerate}
    \item The improvement grows with increasing template size, but no more than an upper bound, which is $\gamma\beta^{-1} d$.
    \item For a relatively small average degree $d$, the improvement is proportional to $d$, and the ratio is $\gamma\beta^{-1}$.
    \item For a relatively large average degree $d$, improvement will approach an upper bound between $\gamma\alpha^{-1}k$ and $\gamma\alpha^{-1}(\frac{3}{2})^{k} $.
\end{enumerate}

\section{Experimental Setup}
\label{sec:experiments}

\subsection{Software}
\label{sub:exp:impl}

We use the following three implementations. All of the binaries are compiled by the Intel(R) C++ compiler for Intel(R) 64 target platform from Intel(R) Parallel Studio XE 2019, with compilation flags
of ``-O3``, ``-xCore-AVX2", ``-xCore-AVX512", and the Intel(R) OpenMP. 
\begin{itemize}
    \item \textbf{\fasciaimpl{}} implements the FASCIA algorithm in~\cite{SlotaFastApproximateSubgraph2013}, which serves as a performance baseline.
    \item \textbf{\pvtsc{}} implements the data structure of \fasciaimpl{} but applying a pruning optimization on the graph traversal. 
    \item \textbf{\ourimpl{}} implements both of pruning optimization and 
    the GraphBLAS inspired vectorization. 
\end{itemize}

\subsection{Datasets and Templates}
\label{sub:datasets_and_templates}
\begin{figure}[ht]
    \centering
    \includegraphics[width=0.9\linewidth]{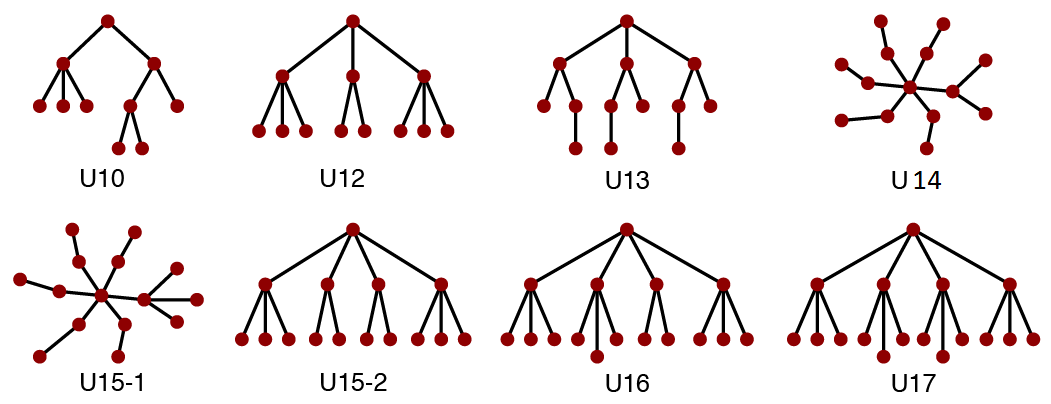}
    \caption{Templates used in experiments}
    \label{fig:templates}
\end{figure}
The datasets in our experiments are listed in Table~\ref{tab:datasets}, where \emph{Graph500 Scale=20, 21, 22} are collected from~\cite{graphchallengemit}; \emph{Miami}, \emph{Orkut}, and~\emph{NYC} are from ~\cite{barrett_generation_2009}~\cite{snapnets}~\cite{yang_defining_2012}; RMAT are widely used synthetic datasets generated by the RMAT model~\cite{chakrabarti_r-mat:_2004}. 
The templates in Figure~\ref{fig:templates} are from the tests in~\cite{SlotaFastApproximateSubgraph2013} or created by us. The template size increases from 10 to 17 while some templates have two different shapes.

\begin{table*}[ht]
\centering
\tiny
\caption{Datasets used in the experiments (K=$10^3$, M=$10^6$)}
\label{tab:datasets}
\resizebox{0.8\textwidth}{!}{%
    \begin{tabular}{lllllll}
        \toprule
        Data  &  Vertices &  Edges &   Avg Deg &   Max Deg & Abbreviation & Source\\ 
        \midrule
         Graph500 Scale=20  & 600K & 31M  & 48 & 67K & GS20 & Graph500~\cite{graphchallengemit}\\ 
         Graph500 Scale=21  & 1M & 63M  & 51 & 107K & GS21 & Graph500~\cite{graphchallengemit}\\ 
         Graph500 Scale=22   & 2M & 128M  & 53 & 170K & GS22 & Graph500~\cite{graphchallengemit}\\
        Miami &  2.1M & 200M &  49 &  10K & MI &  Social network~\cite{barrett_generation_2009}\\ 
        Orkut &  3M & 230M  & 76 & 33K  & OR &  Social network~\cite{snapnets}\\ 
        NYC   & 18M & 960M  & 54 & 429 & NY &  Social  network~\cite{yang_defining_2012}\\ 
        RMAT-1M & 1M & 200M & 201 & 47K  & RT1M & Synthetic data~\cite{chakrabarti_r-mat:_2004}\\
        RMAT(k=3) & 4M & 200M & 52 & 26K & RTK3 & Synthetic data~\cite{chakrabarti_r-mat:_2004}\\
        RMAT(k=5) & 4M & 200M & 73 & 144K & RTK5 & Synthetic data~\cite{chakrabarti_r-mat:_2004}\\
        RMAT(k=8) & 4M & 200M & 127 & 252K & RTK8 & Synthetic data~\cite{chakrabarti_r-mat:_2004}\\
        \bottomrule
    \end{tabular}
}
\end{table*}

\subsection{Hardware}
\label{sub:experimental setup}

In the experiments, we use: 1) a single node of a dual-socket Intel(R) Xeon(R) CPU E5-2670 v3 (architecture Haswell), 
and 2) a single node of a dual-socket Intel(R) Xeon(R) Platinum 8160 CPU (architecture Skylake-SP) processors.

\begin{table}[ht]
\centering

\caption{Node Specification of Testbed}
\label{tab:testbed}
\resizebox{\linewidth}{!}{%
\begin{tabular}{cccccc}
\toprule
Arch &  Sockets & Cores  &  Threads & \thead{ \normalsize CPU \\ \normalsize Freq} & \thead{\normalsize Peak \\ \normalsize Performance}\\
\midrule
Haswell & 2 & 24 & 48 & 2.3GHz & 1500 GFLOPS\\
Skylake & 2 & 48 & 96 & 2.1GHz & 4128 GFLOPS\\ 
\toprule
Arch & L1(i/d) & L2 & L3 & \thead{\normalsize Memory \\\normalsize  Size} & \thead{\normalsize Memory \\ \normalsize Bandwidth} \\ 
\midrule
Haswell & 32KB & 256KB & 30MB & 125GB & 95GB/s \\
Skylake & 32KB & 1024KB & 33MB & 250GB & 140GB/s \\
\bottomrule
\end{tabular}
}
\end{table}
The Operating System is Red Hat Enterprise Linux Server version 7.6 for both of the nodes, whose specifications are shown in Table~\ref{tab:testbed}. We use Haswell and Skylake to refer to the Intel(R) Xeon(R) CPU E5-2670 v3 and Intel(R) Xeon(R) Platinum 8160 CPU respectively in the rest of the paper. The memory bandwidth is measured by STREAME Benchmark~\cite{McCalpin1995} while the peak performance is measured by using the Intel(R) Optimized LINPACK Benchmark for Linux. We use, by default, the number of threads equal to the number of physical cores, i.e., 48 threads on Skylake node and 24 threads on Haswell node. The threads are bound to cores with a spread affinity. As the Skylake node has twice the memory size and physical cores as the Haswell node, we use it as our primary testbed in the experiments. 

\section{Performance and Hardware Utilization}
\label{sec:results_and_analysis}

\begin{figure*}[ht]
    \centering
    \subfloat[Miami]{\includegraphics[width=0.33\textwidth]{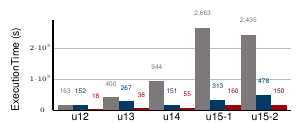}}
    \subfloat[Orkut]{\includegraphics[width=0.33\textwidth]{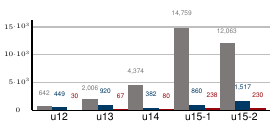}}
    \subfloat[RMAT-1M]{\includegraphics[width=0.33\textwidth]{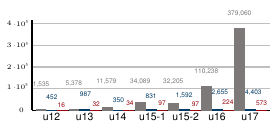}}
    \vspace{-3ex}
    \subfloat[Graph500 Scale=20]{\includegraphics[width=0.33\textwidth]{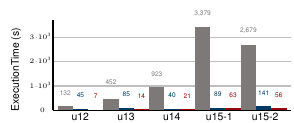}}
    \subfloat[Graph500 Scale=21]{\includegraphics[width=0.33\textwidth]{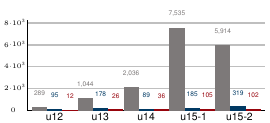}}
    \subfloat[Graph500 Scale=22]{\includegraphics[width=0.33\textwidth]{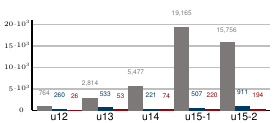}}
   \vspace{-3ex}
    \subfloat[RMAT K=3]{ \includegraphics[width=0.33\textwidth]{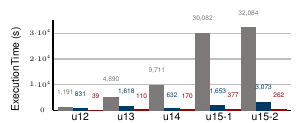}}
    \subfloat[RMAT K=5]{\includegraphics[width=0.33\textwidth]{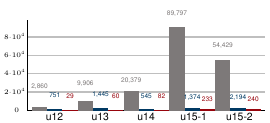}}
    \subfloat[RMAT K=8]{\includegraphics[width=0.33\textwidth]{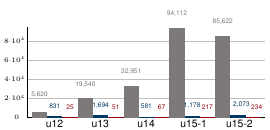}}
    \vfill
    \includegraphics[width=0.25\textwidth]{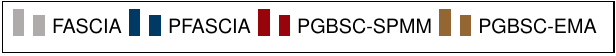}
    \caption{Execution time of \fasciaimpl{} ~versus \pvtsc{} versus \ourimpl{} with increasing template sizes, from U12 to U17. Tests run
    on a Skylake node.}
    \label{fig:ExecutionTime}
\end{figure*}
 
In Figure~\ref{fig:ExecutionTime}(a)(b)(c), we scale the template size up to the memory limitation of our testbed for each dataset, and the reduction of execution time is significant particularly for template sizes larger than 14. 
For instance, \fasciaimpl{} spends four days to process a million-vertex dataset RMAT-1M with template u17 while \ourimpl{} only spends 9.5 minutes. For relatively smaller templates such as u12, \ourimpl{} still achieves 10x to 100x the reduction of execution time on datasets Miami, Orkut, and RMAT-1M. 
As discussed in Section~\ref{sub:improve_estimation}, the improvement is proportional to its average degree. 
This is observed when comparing datasets Miami (average degree of 49) and Orkut (average degree of 76), where \ourimpl{} achieve 10x and 20x improvement on u12 respectively. 
In Figure~\ref{fig:ExecutionTime}(d)(e)(f), we scale up the size of Graph500 datasets in Table~\ref{tab:datasets}. Note that all of the Graph500 datasets have similar average degrees and get similar improvement by \ourimpl{} for the same template. This implies that \ourimpl{} delivers the same performance boost to datasets with a comparable average degree but different scales. Finally, in Figure~\ref{fig:ExecutionTime}(g)(h)(i), we compare RMAT datasets with increasing skewness, which causes the imbalance of vertex degree distribution. The results show that \ourimpl{} has comparable execution time regardless of the growing degree distribution skewness. In contrast, \fasciaimpl{} spends significantly (2x to 3x) more time on skewed datasets. 
To validate the performance gained by using \ourimpl{}, we investigate the performance improvement contributed by pruning and vectorization, accordingly to Section~\ref{sub:exp_prune} and~\ref{sub:exp_vectorization}.

\subsection{Pruning Optimization}
\label{sub:exp_prune}

In Figure~\ref{fig:fascia-pfascia-increasetemp}, we compare six datasets with each starting from the template u12 and up to the largest templates whose count data the Skylake node can hold.
Three observations are made: 1) \pvtsc{} achieves a performance improvement of 10x by average and up to 86x. 
2) \pvtsc{} obtains higher relative performance on large
templates. For instance, it gets 17.2x 
improvement for u15-1 while only 2.2x for u13 for dataset Orkut. 
3) \pvtsc{} works better at datasets with high skewness of degree distribution. Datasets like Miami and Orkut that have a power law distribution only get 8.5x and 17.2x improvement at u15-1, while the RMAT-1M obtains 41x improvement. 

\begin{figure}[ht]
    \centering
    \includegraphics[width=\linewidth]{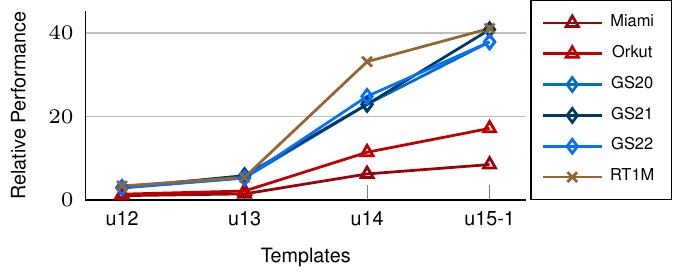}
    \caption{Relative performance of \pvtsc{} versus \fasciaimpl{} with increasing template sizes. Tests are done on a Skylake node.}
    \label{fig:fascia-pfascia-increasetemp}
\end{figure}

\subsection{Vectorization Optimization}
\label{sub:exp_vectorization}

We decompose \ourimpl{} into the two linear algebra kernels SpMM and eMA as described in Section~\ref{sub:kernel-spmm-ema}.

\subsubsection{\textbf{CPU and VPU Utilization}}
\label{sub:cpu-vpu-usage}

Figure~\ref{fig:cpu-vpu-perf}(a) first compares the CPU utilization defined as the average number of concurrently running physical cores. 
For Miami, \pvtsc{} and \fasciaimpl{} achieve 60\% and 75\% of CPU utilization. However, the CPU utilization drops below 50\% on Orkut and NYC who have more vertices and edges than
Miami. Conversely, SpMM kernel keeps a high ratio of 65\% to 78\% for all three of the datasets, and the eMA kernel has a growing CPU utilization from Miami (46\%) to NYC (88\%). We have two explanations: 1) the SpMM kernel splits and regroups the nonzero entries by their row IDs, which mitigates the imbalance of nonzero entries among rows; 2) the eMA kernel has its computation workload for each column of $M_{s,a}, M_{s,p}$ evenly dispatched to threads. \par
\begin{figure}[ht]
    \centering
    \subfloat[CPU Utilization]{    \includegraphics[width=0.9\linewidth]{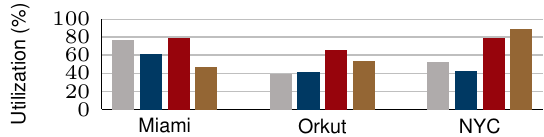}}
    \vspace{-3ex}
    \subfloat[VPU Utilization]{\includegraphics[width=0.9\linewidth]{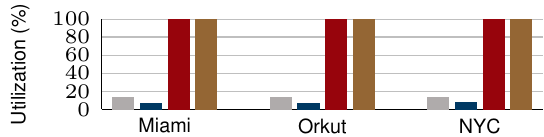}}
     \vspace{-3ex}
    \subfloat[Packed FP]{
    \includegraphics[width=0.9\linewidth]{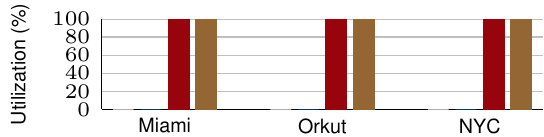}}
    \vfill
    \includegraphics[width=0.4\textwidth]{legend.pdf}
    \caption{The hardware utilization on Skylake node for 
    template u12.}
    \label{fig:cpu-vpu-perf}
\end{figure}
Secondly, we examine the code vectorization in Figure~\ref{fig:cpu-vpu-perf}. VPU in a Skylake node is a group of 512-bit registers. The scalar instruction also utilizes the VPU but it cannot fully exploit its 512-bit length. 
Figure~\ref{fig:cpu-vpu-perf} refers to the portion of instructions vectorized with full vector capacity. 
For all of the three datasets, \pvtsc{} and \fasciaimpl{} only have 6.7\% and 12.5\% VPU utilization implying that the codes may not be vectorized, while for SpMM and eMA kernels of \ourimpl{}, the VPU utilization is 100\%. 
A further metric of packed float point instruction ratio (Packed FP) justifies the implication that \pvtsc{} and \fasciaimpl{} have zero vectorized instructions but \ourimpl{} has all of its float point operations vectorized. 

\subsubsection{\textbf{Memory Bandwidth and Cache}}
\label{sub:bd-cache}

Because of the sparsity, subgraph counting is memory-bounded in a shared memory system. Therefore, the utilization of memory and cache resources are critical to the overall performance. In Table~\ref{tab:exp:mem-cache}, we compare SpMM and eMA of to \pvtsc{} and \fasciaimpl{}. It shows that the eMA kernel has the highest bandwidth value around 110 GB/s for the three datasets, which is due to the highly vectorized codes and regular memory access pattern. Therefore, the data is prefetched into cache lines, which controls the cache miss rate as low as 0.1\%. \par 
The SpMM kernel also enjoys a decent bandwidth usage around 70 to 80 GB/s by average when compared to \pvtsc{} and \fasciaimpl{}. Although SpMM has an L3 miss rate as high as 74\% in dataset NYC because the dense matrix is larger than the L3 cache capacity. The optimized thread level and instruction level vectorization ensures a concurrent data loading from memory leveraging the high memory bandwidth utilization. 
\fasciaimpl{} has the lowest memory bandwidth usage because of the thread imbalance and the irregular memory access. 
\begin{table}[ht]
    \centering
    \caption{Memory and Cache Usage on Skylake Node}
    \resizebox{\linewidth}{!}{
    \begin{tabular}{lllll}
    \toprule
    Miami & Bandwidth & L1 Miss Rate & L2 Miss Rate & L3 Miss Rate  \\
    \midrule
    FASCIA & 6 GB/s & 4.1\% & 1.8\% & 85\% \\ 
    PFASCIA & 48.8 GB/s & 9.5\% & 86.5\% & 50\% \\ 
    PGBSC-SpMM & 86.95 GB/s & 8.3\% & 51.2\% & 36.8\% \\ 
    PGBSC-eMA & 106 GB/s & 0.3\% & 20.6\% & 9.9\% \\ 
    \midrule
    Orkut & Bandwidth & L1 Miss Rate & L2 Miss Rate & L3 Miss Rate  \\
    \midrule
    FASCIA & 8 GB/s & 9.6\% & 5.3\% & 46\% \\ 
    PFASCIA & 30 GB/s & 8.4\% & 76.2\% & 46\% \\ 
    PGBSC-SpMM & 59.5 GB/s & 6.7\% & 42.8\% & 45\% \\ 
    PGBSC-eMA & 116 GB/s & 0.32\% & 22.2\% & 9.0\% \\ 
    \midrule
    NYC & Bandwidth & L1 Miss Rate & L2 Miss Rate & L3 Miss Rate  \\
    \midrule
    FASCIA &  7 GB/s & 2.4\% & 8.1\% & 87\% \\ 
    PFASCIA & 38 GB/s & 10\% & 90\% & 81\% \\ 
    PGBSC-SpMM & 96 GB/s & 7.7\% & 76\% & 74\% \\ 
    PGBSC-eMA & 122 GB/s & 0.1\% & 99\% & 14.8\% \\ 
    \bottomrule
    \end{tabular}}
    \label{tab:exp:mem-cache}
\end{table}
\subsubsection{\textbf{Roofline Model}}

The roofline model in Figure~\ref{fig:roofline} reflects
the hardware efficiency. The horizontal axis is the
operational intensity (FLOP/byte) and the vertical axis 
refers to the measured throughput performance (FLOP/second). 
The solid roofline is the maximal performance the hardware can deliver under a certain operational intensity.
From \fasciaimpl{} to \pvtsc{},
the performance gap to the roofline grows, implying that 
the pruning optimization itself does not improve the hardware utilization although it removes redundant computation. 
From \pvtsc{} to \ourimpl{}, the performance gap to the roofline is reduced significantly, resulting in a performance point hit by the roofline. This near-full hardware efficiency is contributed by the vectorization optimization, which exploits the memory bandwidth usage.\par
\begin{figure}[ht]
 \centering
 \includegraphics[width=\linewidth]{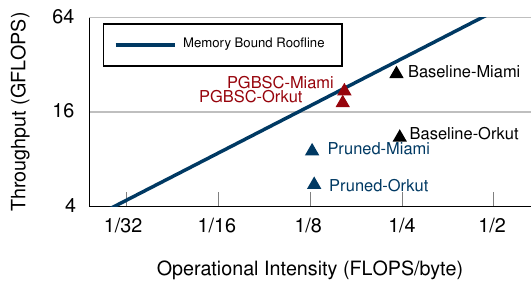}
 \caption{
 Apply roofline model to \fasciaimpl{}, \pvtsc{},and 
 \ourimpl{}. Dataset Miami, Orkut for template u15-1. Tests
 are done on a Skylake node.
}
\label{fig:roofline}
\end{figure}
Figure ~\ref{fig:hsw-skl} compares the performance of \ourimpl{} between the Skylake node and the Haswell node. For both test beds, we set up threads numbered equal to their physical cores, and Skylake node in has a 1.7x improvement over Haswell node. 
Although the Skylake node doubles the CPU cores compared to the Haswell node, it increases the peak memory bandwidth by only 47\% in Table~\ref{tab:testbed}. 
As \ourimpl{} is bounded by the memory roofline in
Figure~\ref{fig:roofline}, the estimated improvement shall be a value between 1.47x and 2x, and the $1.7x$ improvement by \ourimpl{} is reasonable.
 \begin{figure}[ht]
    \centering
    \includegraphics[width=\linewidth]{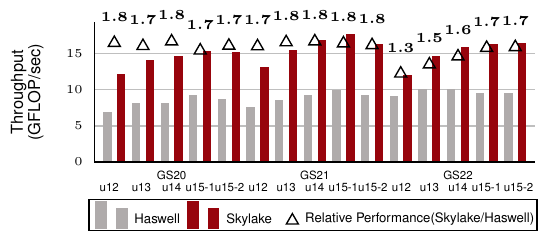}
    \caption{The performance throughput of \ourimpl{} on 
    Haswell node versus. Skylake node.}
    \label{fig:hsw-skl}
\end{figure}

\subsection{Load Balance and Thread Scaling}
\begin{figure}[ht]
 \centering
 \includegraphics[width=\linewidth]{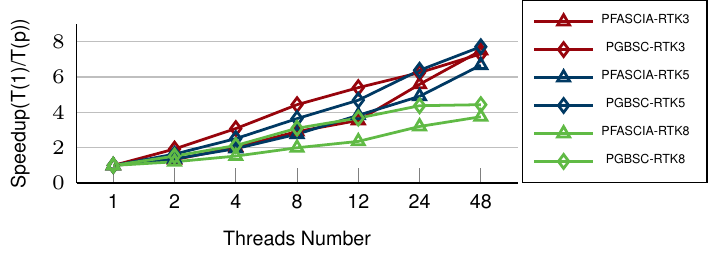}
 \caption{Thread scaling for RMAT datasets with increasing skewness on a Skylake node.} 
 \label{fig:thdscale}
\end{figure}
We perform a strong scaling test using up to 48 threads on Skylake node in Figure~\ref{fig:thdscale}. We choose RMAT generated datasets with increasing skewness parameters of $K=3, 5, 8$. When $K=3$, \ourimpl{} has a better thread scaling than
\pvtsc{} from 1 to 12 threads; after 12 threads, 
the thread scaling of \ourimpl{} slows down.
As the performance is bounded by memory, which has 6 memory
channels per socket, we have a total of 12 memory
channels on a Skylake node that bounds the thread scaling. 
To the contrary, \pvtsc{} is not bounded by the memory
bandwidth, which explains why it keeps an 
increasing thread scaling from 12 to 48 threads. 
Eventually, both of \ourimpl{} and \pvtsc{} 
obtain a 7.5x speedup at 48 threads. 
When increasing the skewness of datasets to $K=5,8$, 
the thread scalability of \pvtsc{} and \ourimpl{} both drop 
down because the skewed data distribution brings workload
imbalance when looping vertex neighbors. 
However, \ourimpl{} has a better scalability 
than \pvtsc{} at 48 threads because of the SpMM kernel.

\subsection{Error Discussion}
\ourimpl{} with its pruning and vectorization optimization only differs from the \fasciaimpl{} due to the restructuring of the computation from Algorithm 1 to Algorithm 4, and so should give identical results with exact arithmetic in Equation~\ref{eq:transform_traversal}. However,the range of values needed when processing large templates. As a consequence, both \fasciaimpl{} and our \ourimpl{} use floating point numbers to avoid overflow. Hence, slightly different results are observed between \fasciaimpl{} and \ourimpl{} due to the rounding error consequent from floating point arithmetic operations. Figure~\ref{fig:error} reports such relative error in the range of $10^{-6}$ across all the tests on a Graph500 GS20 dataset with increasing template sizes, which is negligible.
\begin{figure}[ht]
    \centering
    \includegraphics[width=\linewidth]{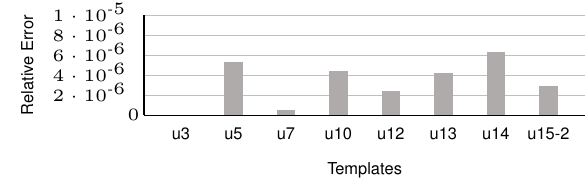}
     \caption{Relative error on dataset Graph500 Scale=20.
     Tests are done on a Skylake node.}
    \label{fig:error}
\end{figure}
\subsection{Overall Performance Improvement}
Figure~\ref{fig:fascia-pgbsc-increasetemp} shows significant performance improvement of \ourimpl{} over \fasciaimpl{} for a variety of networks and subtemplates; 
the relative performance grows with template sizes. 

\begin{figure}[ht]
    \centering
    \includegraphics[width=\linewidth]{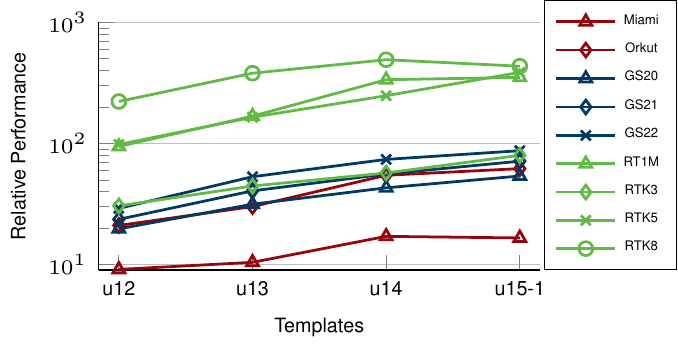}
     \caption{Performance improvement of \ourimpl{} vs. \fasciaimpl{} with increasing template sizes. 
     Tests are done on a Skylake node.}
    \label{fig:fascia-pgbsc-increasetemp}
\end{figure}

For small dataset Miami and template u12, they still 
achieve 9x improvement, and for datasets like Graph500
(scale:20) and templates with size starting from u14,
\ourimpl{} achieves around 50x improvement by average and 
up to 660x improvement for synthetic dataset RT1M at large
template u17.

\section{Conclusion}
\label{sec:conclusion}

As the single machine with big shared memory and many cores is becoming an attractive solution to graph analysis problems~\cite{perez2015ringo}, the irregularity of memory access remains a 
roadblock to improve hardware utilization.  
For fundamental algorithms, such as PageRank, the fixed data structure and predictable execution order are explored to improve data locality either in graph traversal approach ~\cite{lakhotia2018accelerating}\cite{zhang2017making} 
or in linear algebra approach~\cite{beamer2017reducing}. Subgraph counting, with random access to vast memory region and dynamic programming workflow, requires much more effort to exploit the cache efficiency and hardware vectorization.
To the best of our knowledge, we are the first to fully vectorize a 
sophisticated algorithm of subgraph analysis, and the novelty is a co-design approach to combine algorithmic improvement with pattern identification of linear algebra kernels that leverage hardware vectorization. \par

The overall performance achieves a promising improvement over 
the state-of-the-art work by orders of magnitude by average 
and up to 660x (RMAT1M with u17) within a shared-memory 
multi-threaded system, and we are confident 
to generalize this GraphBLAS inspired approach in 
our future work: 1) enabling counting of tree subgraph 
with size larger than 30 and subgraph beyond trees; 
2) extending the shared-memory implementation to distributed system 
upon our prior work~\cite{chen2018high}~\cite{SlotaComplexNetworkAnalysis2014a}
~\cite{zhao2018finding}~\cite{zhao_sahad:_2012}; 3) exploring 
other graph and machine learning problems by this GraphBLAS inspired co-design 
approach; 4) adding support to more
hardware architectures (e.g, NEC SX-Aurora and NVIDIA GPU).
The codebase of our work on \ourimpl{} is made public in our open-sourced repository\footnote{\url{https://github.com/DSC-SPIDAL/harp/tree/pgbsc}}.
\section{Acknowledgments}
We gratefully acknowledge generous support from the Intel Parallel Computing Center (IPCC),  the NSF CIF-DIBBS 1443054: Middleware and High-Performance Analytics Libraries for Scalable Data Science, and NSF BIGDATA 1838083: Enabling Large-Scale, Privacy-Preserving Genomic Computing with a Hardware-Assisted Secure Big-Data Analytics Framework. We appreciate the support from the IU PHI grand challenge, the FutureSystems team, and the ISE Modelling and Simulation Lab.

\bibliographystyle{ACM-Reference-Format}
\bibliography{refs}

\end{document}